\renewcommand{\1}{ {\rm 1\hspace*{-0.4ex}\rule{0.1ex}{1.52ex}\hspace*{0.2ex}} }
\title{AG codes have no list-decoding friends: \\ \Large Approaching the generalized Singleton bound requires exponential alphabets\thanks{This paper was presented in part at SODA 2024.}}
\author{Omar Alrabiah\thanks{Department of EECS, UC Berkeley, Berkeley, CA, 94709, USA. Email: \url{oalrabiah@berkeley.edu}. Research supported in part by a Saudi Arabian Cultural Mission (SACM) Scholarship, NSF CCF-2210823 and V.\ Guruswami's Simons Investigator Award.} \and Venkatesan Guruswami\thanks{Departments of EECS and Mathematics, and the Simons Institute for the Theory of Computing, UC Berkeley, Berkeley, CA, 94709, USA. Email: \url{venkatg@berkeley.edu}. Research supported by a Simons Investigator Award and NSF grants CCF-2210823 and CCF-2228287.}  \and Ray Li\thanks{Department of EECS, UC Berkeley, Berkeley, CA, 94709, USA. Email: \url{rayyli@berkeley.edu}. Research supported by the NSF Mathematical Sciences Postdoctoral Research Fellowships Program under Grant DMS-2203067, and a UC Berkeley Initiative for Computational Transformation award.}}
\begin{document}

\maketitle
\thispagestyle{empty}

\begin{abstract}
A simple, recently observed generalization of the classical Singleton bound to list-decoding asserts that rate $R$ codes are not list-decodable using list-size $L$ beyond an error fraction $\tfrac{L}{L+1} (1-R)$ (the Singleton bound being the case of $L=1$, i.e., unique decoding). We prove that in order to approach this bound for any fixed $L >1$, one needs exponential alphabets. Specifically, for every $L>1$ and $R\in(0,1)$, if a rate $R$ code can be list-of-$L$ decoded up to error fraction $\tfrac{L}{L+1} (1-R -\eps)$, then its alphabet must have size at least $\exp(\Omega_{L,R}(1/\eps))$. This is in sharp contrast to the situation for unique decoding where certain families of rate $R$ algebraic-geometry (AG) codes over an alphabet of size $O(1/\varepsilon^2)$ are unique-decodable up to error fraction  $(1-R-\varepsilon)/2$. 
Our bounds hold even for subconstant $\varepsilon\ge 1/n$, implying that any code exactly achieving the $L$-th generalized Singleton bound requires alphabet size $2^{\Omega_{L,R}(n)}$. Previously this was only known only for $L=2$ under the additional assumptions that the code is both linear and MDS.

\smallskip Our lower bound is tight up to constant factors in the exponent---with high probability random codes (or, as shown recently, even random linear codes) over $\exp(O_L(1/\varepsilon))$-sized alphabets, can be list-of-$L$ decoded up to error fraction $\tfrac{L}{L+1} (1-R -\varepsilon)$. 
\end{abstract}

\section{Introduction}

The Singleton bound \cite{Singleton64} states that a code of rate $R$ cannot uniquely correct a fraction of worst-case errors exceeding $\frac{1}{2}(1-R)$. The straightforward generalization of this bound to list decoding implies that one cannot do list-of-$L$ decoding (where the decoder must output at most $L$ codewords) beyond an error fraction of $\tfrac{L}{L+1} (1-R)$. See Figure~\ref{fig:singleton} for an illustration of this bound, which has been called the \emph{generalized Singleton bound}~\cite{ST20}.

\begin{figure}
  \begin{center}
    \newcommand\dashvert[2]{\draw[dashed] (#1,-1.5) -- (#1,#2);}
    \newcommand\tick[1]{\draw (#1,-1.6) -- (#1,-1.4); \dashvert{#1}{2.7}} 
    \begin{tikzpicture}[scale=0.6]
        \def\ep{2}
        \def\ii{4}
        \def\ca{gray}
        \def\cb{yellow!40}
        \def\cc{orange!40}
        \def\ce{red!40}
        \def\cd{cyan}

        \draw[fill=\ce] (-\ep,2) rectangle (4*\ii,2.5);
        \draw[fill=\cc] (-\ep,1) rectangle (4*\ii,1.5);
        \draw[fill=\cb] (-\ep,0) rectangle (4*\ii,0.5);
        \draw[fill=\ca] (-\ep,-1) rectangle (4*\ii,-.5);
        
        \draw[fill=\cd] (-\ep,2) rectangle (\ii,2.5);
        \draw[fill=\cd] (-\ep,1) rectangle (\ii,1.5);
        \draw[fill=\cd] (-\ep,0) rectangle (\ii,0.5);
        \draw[fill=\cd] (-\ep,-1) rectangle (\ii,-.5);
        \draw[fill=\ce] (\ii,-1) rectangle (2*\ii,-0.5);
        \draw[fill=\cc] (2*\ii,-1) rectangle (3*\ii,-0.5);
        \draw[fill=\cb] (3*\ii,-1) rectangle (4*\ii,-0.5);

        \tick{-\ep}
        \tick{\ii}
        \tick{2*\ii}
        \tick{3*\ii}
        \tick{4*\ii}
        \draw (-\ep,-1.5) -- (4*\ii,-1.5);
        \draw[line width=3pt] (-\ep,-1) rectangle (4*\ii,-.5);
        
        \node at (-0.5-\ep,2.25) {$c_0$};
        \node at (-0.5-\ep,0.25) {$c_2$};
        \node at (-0.5-\ep,1.25) {$c_1$};
        \node at (-0.5-\ep,-0.75) {$y$};
        \node at ({0.5*(-\ep+\ii)},-2) {$Rn$};
        \node at (1.5*\ii,-2) {$\frac{1}{3}(1-R)n$};
        \node at (2.5*\ii,-2) {$\frac{1}{3}(1-R)n$};
        \node at (3.5*\ii,-2) {$\frac{1}{3}(1-R)n$};
  \end{tikzpicture}
  \end{center}
  \caption{The generalized Singleton bound, illustrated for $L=2$. In any code of rate $R$, by pigeonhole, there are three codewords $c_0,c_1,c_2$ that agree on the first $Rn-O(1)$ coordinates. Then there is a ``list-decoding center'' $y$ that differs from each of $c_0, c_1,c_2$ on at most $\frac{2}{3}(1-R)n +O(1)$ coordinates, so the code is not $(\frac{2}{3}(1-R)+o(1),2)$-list-decodable.}
  \label{fig:singleton} 
\end{figure}
Our main result is that approaching the generalized Singleton bound within $\varepsilon$ requires an alphabet size exponential in $1/\varepsilon$. We say a code $C \subset \Sigma^n$ is $(p,L)$-list decodable if for every $y \in \Sigma^n$, there are at most $L$ codewords of $C$ within Hamming distance $p n$ from $y$. Formally, we prove:

\begin{theorem}
  Let $L\ge2$ be a fixed constant and $R\in(0,1)$.
  There exists an absolute constant $\alpha_{L,R}$ such that the following holds for all $\varepsilon>0$ and all sufficiently large $n\ge\Omega_{L,R}(1/\varepsilon)$. 
  Let $C$ be a code of length $n$ with alphabet size $q$ that is $(\frac{L}{L+1}(1-R-\varepsilon),L)$-list-decodable.
  Then, $q\ge 2^{\alpha_{L,R}/\varepsilon}$.
  \label{thm:main}
\end{theorem}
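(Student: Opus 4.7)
The plan is to prove the contrapositive. Suppose $C \subseteq \Sigma^n$ has rate $R$ (so $|C| \geq q^{Rn}$) with $|\Sigma| = q$ small; I will exhibit a center $y \in \Sigma^n$ and $L+1$ distinct codewords all within Hamming distance $\frac{L}{L+1}(1-R-\varepsilon)n$ of $y$, violating $(p,L)$-list-decodability. The construction template is the one in Figure~\ref{fig:singleton}: if I can find $L+1$ codewords agreeing on a common coordinate set $S$ of size $\ge (R+\varepsilon)n$, then setting $y$ equal to their common value on $S$ and matching $y$ to codeword $c_j$ on the $j$-th of $L+1$ equal-size classes in a partition of $[n] \setminus S$ yields $d(y,c_j) \leq \frac{L}{L+1}(n-|S|) \leq \frac{L}{L+1}(1-R-\varepsilon)n$ for every $j$.

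The main combinatorial step is then to prove that every rate-$R$ code over a small alphabet contains $L+1$ codewords with common agreement $\geq (R+\varepsilon)n$. The naive iterated pigeonhole (restricting to the plurality value coordinate by coordinate) yields only $Rn - O_L(1)$ common agreement, falling short by $\varepsilon n$, and the direct double-counting argument (``no $L+1$ codewords agree on any $(R+\varepsilon)n$-subset implies $|C| \leq L q^{(R+\varepsilon)n}$'') is consistent with $|C| = q^{Rn}$ and therefore gives no useful bound. To close the $\varepsilon n$ gap I would stop the pigeonhole early, keeping a sub-code of $q^{\delta n}$ codewords all sharing values on $(R-\delta)n$ coordinates, and then apply a recursive or list-recovery-style argument on the residual $(1-R+\delta)n$ coordinates to harvest $\Theta(\delta n + \varepsilon n)$ additional agreement; chaining $\Theta(1/\varepsilon)$ such reductions, each consuming roughly $\log q$ of the rate budget, should force $\log q = \Omega_{L,R}(1/\varepsilon)$.

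The main obstacle is precisely this amplification: a single pigeonhole or Plotkin-style averaging yields at best polynomial dependence of $q$ on $1/\varepsilon$, and chaining into an exponential bound is delicate because $O_L(1)$ slack at each level can accumulate and absorb the $\varepsilon$-scale gains. I expect the proof to use the full strength of list-decodability---ruling out not only $L+1$ codewords with large common agreement but also more general $L+1$-tuples whose Chebyshev center lies inside the target Hamming ball (e.g., $L+1$-tuples whose ``all-pairs-differ'' positions are few relative to their common-agreement positions)---together with an amortized potential argument or an encoding/compression scheme that converts the list-decodability hypothesis into a tight constraint on the rate-alphabet tradeoff. Getting the exponential scaling rather than a polynomial one is, I suspect, the single place where any sloppiness loses a factor in $q$.
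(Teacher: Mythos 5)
Your proposal correctly identifies the shape of the goal (exhibit a bad list-decoding configuration when $q$ is small) and correctly diagnoses that a naive pigeonhole leaves an $\varepsilon n$ deficit, but the step that would close that deficit---the ``recursive or list-recovery-style argument'' chained $\Theta(1/\varepsilon)$ times, or alternatively an ``amortized potential argument or encoding/compression scheme''---is not an argument; it is a description of where an argument would need to go. That missing step \emph{is} the theorem: everything before it (the construction of $y$ from $L+1$ codewords with common agreement $(R+\varepsilon)n$, the observation that double counting is vacuous at rate $R$) is routine, and everything after it is the conclusion. Moreover, the target configuration you aim for is stronger than necessary and is not what the paper establishes: the paper never produces $L+1$ codewords sharing a \emph{common} agreement set of size $(R+\varepsilon)n$. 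It is not clear that such a configuration must exist in every rate-$R$ code over a subexponential alphabet, and nothing in your sketch shows it does.

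What the paper actually does is find an \emph{asymmetric} configuration (Figure~\ref{fig:agreement}): a single codeword $c_0$ together with $L$ codewords $c_j=f^{A_j}(c_0)$, where each $c_j$ agrees with $c_0$ on its own set $A_j$ of size only $k-1$ (no $\varepsilon$ slack needed there), and the $c_1,\dots,c_L$ additionally agree \emph{with each other} on a tiny prefix $I_0$ of size $\Theta(\varepsilon n)$. The exponential bound comes from a single two-stage pigeonhole, not from chaining: first, an averaging argument over a family $\mathcal{F}$ of $2^{\Omega_{L,R}(n)}$ near-disjoint sets of size $k-1$ (Lemma~\ref{lem:sets}) produces a $c_0$ with $|\mathcal{F}|/2$ partners; second, pigeonholing those partners on their restriction to $I_0$ forces $q^{|I_0|}=q^{\Theta(\varepsilon n)}\ge 2^{\Omega_{L,R}(n)}$, i.e., $q\ge 2^{\Omega_{L,R}(1/\varepsilon)}$. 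The center $y$ is then built by ``transferring'' agreement from $c_0$ to $c_1,\dots,c_L$ along intervals $I_1,\dots,I_L$ so that all $L+1$ distances land at most $pn$ simultaneously, and the distinctness of the $c_j$'s (which your sketch does not address at all) is handled by first passing to a subcode of distance $(p+\frac{p^L}{2L})n$ (Lemma~\ref{lem:distance-1}) and requiring $\mathcal{F}$ to have large $W$-wise unions. If you want to salvage your approach, the lesson is that the exponential scaling does not come from amortizing $1/\varepsilon$ rounds of pigeonhole but from making the ``collision space'' of a single pigeonhole have size $q^{\Theta(\varepsilon n)}$ while the family being pigeonholed has size $2^{\Omega(n)}$.
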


The exponential alphabet size lower bound in Theorem~\ref{thm:main} is in sharp contrast with the situation for unique decoding (the $L=1$ case), where, for any desired rate $R$, certain families of algebraic-geometric (AG) codes over an alphabet of size $O(1/\eps^2)$ allow unique decoding up to an error fraction $(1-R-\eps)/2$ with rate $R$ \cite{TVZ82, GS95}.

The Plotkin bound \cite{Plotkin60} implies a lower bound of $\Omega(1/\eps)$ for such unique decodability,\footnote{The Plotkin bound is typically stated as follows: a code with relative distance  at least $\frac{q-1}{q}$ has size $O(n)$. This alphabet dependent version follows by, in a rate $R$ code, pigeonholing to find $O(n)$ codewords agreeing on the first roughly $R n$ coordinates, and finding (by the Plotkin bound) two codewords that additionally agree on $\frac{1}{q}$ fraction of the remaining coordinates, so that we need $q\ge \Omega(1/\varepsilon)$.} and AG codes come within a quadratic factor of this bound. However, for list decoding with any fixed list-size $L >1$, there is no such AG-like polynomial convergence (as a function of alphabet size) to the optimal trade-off $\tfrac{L}{L+1}(1-R)$. In fact, the convergence is exponentially slow.

\begin{remark}
\label{remark:growing-l}
    For a code $C \subset [q]^n$ of rate $R$, note that a random Hamming ball of radius $p n$ has in expectation $q^{h_q(p) n -o(n)} q^{(R-1)n}$ codewords.\footnote{Here, $h_q(x)\defeq x\log_q(q-1) - x\log_qx - (1-x)\log_q(1-x)$ is the $q$-ary entropy function.} For $C$ to be $(p,L)$-list decodable one must therefore have $h_q(p) \le 1-R +o(1)$. (This trade-off is the list-decoding capacity for codes of alphabet size $q$.) A straightforward computation then implies a lower bound of $q\ge 2^{\Omega_R(\min(L,1/\varepsilon))}$ on the alphabet size of a family of $(\frac{L}{L+1}(1-R-\varepsilon),L)$-list-decodable codes.
For a fixed $L$, this lower bound does not scale with $\eps$.
In comparison, we get an exponential in $1/\eps$ lower bound for any fixed list size $L$.
\end{remark}

Our work builds on the recent work of 
Brakensiek, Dhar, and Gopi~\cite{BDG22} who proved the following result for MDS codes that are list-decodable all the way up to the generalized Singleton bound. Recall that a (linear) MDS code is one whose dimension $k$, minimum distance $d$, and block length $n$ satisfy $k+d=n+1$.
\begin{theorem}[\cite{BDG22}]
  Let $R\in(0,1)$.
  Any linear MDS code that is $(\frac{2}{3}(1-R),2)$-list-decodable must have alphabet size at least $2^{\Omega_R(n)}$.
  \label{thm:bdg}
\end{theorem}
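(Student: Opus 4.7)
My approach is to translate the $(\tfrac{2}{3}(1-R),2)$-list-decodability of a linear MDS code into a configurational constraint on its generator matrix, and then show by counting that this constraint forces $q$ to be exponentially large in $n$. Let $C$ be a linear $[n,k]$ MDS code over $\mathbb{F}_q$ with $k=Rn$ and minimum distance $d=n-k+1$, with generator matrix $G\in\mathbb{F}_q^{k\times n}$ whose columns $g_1,\dots,g_n$ form an $n$-arc (any $k$ are linearly independent). A triple of distinct codewords $c_0,c_1,c_2$ corresponds to messages $m_0,m_1,m_2$, and for $v_1=m_1-m_0$, $v_2=m_2-m_0$, $v_3=v_2-v_1$, codewords $c_i,c_j$ agree on position $\ell$ iff $g_\ell\in(m_i-m_j)^\perp$. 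The MDS property says each hyperplane $v_j^\perp$ contains at most $k-1$ of the $g_\ell$.

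\paragraph{Step 1: identify the critical configuration.} A direct distance computation---cycling a ``list-decoding center'' $y$ through the three codewords on their disagreement coordinates, as in the figure in the introduction---shows that the minimal achievable radius equals exactly $\lfloor\tfrac{2}{3}(1-R)n\rfloor$ precisely when the agreement counts meet a tight balance: each of the three hyperplanes $v_1^\perp,v_2^\perp,v_3^\perp$ contains exactly $k-1$ columns (saturating MDS), and the common intersection $v_1^\perp\cap v_2^\perp$ contains $k-\Theta(1)$ columns, with the exact constant depending on $d\bmod 3$. Consequently, $C$ fails to be $(\tfrac{2}{3}(1-R),2)$-list-decodable whenever three such vectors $v_1,v_2,v_3$ with $v_1+v_3=v_2$ and this triple-intersection pattern exist.

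\paragraph{Step 2: force such a configuration when $q$ is small.} The MDS weight enumerator yields $A_d=(q-1)\binom{n}{d}$ minimum-weight codewords, equivalently $(q-1)\binom{n}{d}$ nonzero vectors $v\in\mathbb{F}_q^k$ whose hyperplane $v^\perp$ contains exactly $k-1$ columns. I would sample two such ``saturating'' vectors $v_1,v_2$ uniformly, then lower-bound the probability that (i) $v_3=v_2-v_1$ is also saturating, and (ii) the intersection $v_1^\perp\cap v_2^\perp$ has the required size $k-\Theta(1)$. A carefully set-up counting / second-moment argument should show that if $q<2^{\alpha_R n}$ for an appropriate constant $\alpha_R>0$, the expected number of valid triples is strictly positive, producing a list-decoding-violating configuration and delivering the contradiction.

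\paragraph{Main obstacle.} The crux lies in Step 2. Requiring all three hyperplanes $v_1^\perp,v_2^\perp,(v_2-v_1)^\perp$ to simultaneously saturate MDS at exactly $k-1$ columns is an algebraically rigid condition, and naive independence estimates will undercount the valid triples. The proof has to leverage MDS-specific structure---in the spirit of the $n$-arc rigidity theorems in $\mathrm{PG}(k-1,q)$ underlying the MDS conjecture---to argue that minimum-weight codewords of a short-alphabet MDS code come in correlated ``collinear'' families in sufficient abundance. Pushing past the generic $q\ge\Omega(n)$ threshold (which follows from arc-length bounds alone) all the way up to $q\ge 2^{\Omega_R(n)}$ is exactly where this MDS-specific rigidity must be exploited.
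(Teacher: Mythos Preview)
Your proposal outlines a plausible line of attack but leaves the decisive step unresolved. You correctly identify in Step~1 that a $(\tfrac{2}{3}(1-R),2)$-list-decoding violation corresponds to three codewords whose pairwise-difference hyperplanes each contain close to $k-1$ columns of $G$. The problem is Step~2: you propose a second-moment count over minimum-weight codewords to exhibit such a triple when $q$ is small, but you openly concede that naive independence estimates fail and you do not supply the ``MDS-specific rigidity'' argument that would close the gap. As written, this is a plan whose hardest step is missing rather than a proof.

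Note that Theorem~\ref{thm:bdg} is \emph{cited} from \cite{BDG22} rather than proved in this paper; the original argument there is indeed algebraic (via the higher-order MDS(3) property of the parity-check matrix), so your instinct to exploit linear-algebraic structure is in the right spirit for \emph{that} proof. However, the present paper recovers Theorem~\ref{thm:bdg} as the $\varepsilon=\Theta(1/n)$ case of Theorem~\ref{thm:main}, and the argument you should compare against---the warmup Proposition~\ref{pr:main-0} and the full proof in Section~\ref{sec:proof}---bypasses your obstacle entirely by a purely combinatorial route. Rather than searching for three mutually-saturating hyperplanes at once, it fixes one codeword $c_0$, observes that for at least a $(1-1/q)$-fraction of the $(k-1)$-subsets $A\subset[n]\setminus\{1,2\}$ some other codeword agrees with $c_0$ on $A$ (since there are only $q^{k-1}$ possible restrictions $c|_A$ but $q^k$ codewords), and then pigeonholes the resulting $>\tfrac{1}{2}\binom{n-2}{k-1}$ companion codewords by their values on the two coordinates $\{1,2\}$. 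If $q^2<\tfrac{1}{2}\binom{n-2}{k-1}$, two companions $f^{A_1}(c_0),f^{A_2}(c_0)$ collide on $\{1,2\}$; the MDS distance forces them to be distinct, and the obvious center $y$ witnesses the list-decoding failure. No weight enumerators, no second moments, no arc rigidity---just two nested pigeonholes. This is precisely why the paper's argument extends to non-linear codes, whereas your generator-matrix framework cannot.
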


We present a comparison of our Theorem~\ref{thm:main} to Theorem~\ref{thm:bdg} in Section~\ref{ssec:comparison-BDG}. To summarize this comparison, our result generalizes Theorem~\ref{thm:bdg} in four ways: our result (i) applies to general (not necessarily linear) codes, (ii) incorporates the gap to capacity $\varepsilon$ (iii) removes the MDS assumption (or more generally any assumption on the code distance), and (iv) generalizes to larger $L$. In essence, our proof distills the proof of Theorem~\ref{thm:bdg} to its combinatorial core, and then adds new ideas to enable these generalizations.

We note that Theorem~\ref{thm:main} strictly generalizes Theorem~\ref{thm:bdg} (with a worse constant in the exponent), since we can take $\varepsilon=\Theta(1/n)$. Setting, $\varepsilon=\Theta(1/n)$ we have that any $(\frac{L}{L+1}(1-R),L)$-list-decodable code is certainly $(\frac{L}{L+1}(1-R-\varepsilon),L)$ and thus, by Theorem~\ref{thm:main}, must have alphabet size $2^{\Omega_{L,R}(n)}$. Thus, we prove that \emph{any} code --- not just linear MDS codes --- exactly achieving the generalized Singleton bound must have $2^{\Omega_{L,R}(n)}$-sized alphabets.

One can show that an exponential in $1/\eps$ alphabet size suffices to approach the generalized Singleton bound within $\eps$. The argument is a simple random coding argument with alterations (see, e.g., \cite[Chapter 3]{AS16}), which we present for completeness in Appendix~\ref{sec:codes-exist}. 
\begin{proposition}
\label{pr:codes-exist}
  Let $L\ge1$ be a fixed constant, let $R\in(0,1)$, and let $\varepsilon\in(0,1)$.
  There exists a code $C$ over an alphabet size $q\le 2^{O(1/\varepsilon)}$ that is $(\frac{L}{L+1}(1-R-\varepsilon),L)$-list-decodable.
\end{proposition}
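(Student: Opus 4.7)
The plan is the standard random-coding-with-alterations argument. Set $p = \frac{L}{L+1}(1-R-\varepsilon)$, choose alphabet size $q = 2^{\lceil C/\varepsilon\rceil}$ for a suitably large constant $C = C(L,R)$, and sample $M = 2q^{Rn}$ codewords $c_1,\ldots,c_M$ independently and uniformly at random from $[q]^n$. Call an $(L+1)$-subset of these codewords \emph{bad} if some $y\in[q]^n$ lies within Hamming distance $pn$ of every codeword in the subset. The goal is to show that the expected number of bad subsets is at most $M/2$; then deleting one codeword from each bad subset yields a code of at least $q^{Rn}$ codewords, hence rate $\ge R-o(1)$, that is $(p,L)$-list-decodable by construction.

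For the expectation bound, for any fixed $y$ and a single random codeword $c$ we have $\Pr[d(c,y)\le pn] = V_q(n,pn)/q^n \le q^{(h_q(p)-1)n}$, where $V_q(n,pn)$ is the $q$-ary Hamming ball volume. By independence across the $L+1$ codewords and a union bound over the choice of $y$ and of $(L+1)$-subset, the expected number of bad subsets is at most
\[
\binom{M}{L+1}\, q^n\, q^{(L+1)(h_q(p)-1)n} \;\le\; q^{n\,[\,1 + (L+1)R + (L+1)(h_q(p)-1)\,] + O_L(1)}.
\]
Comparing to $M/2 = q^{Rn + O(1)}$, the inequality we need reduces to $h_q(p) \le \frac{L}{L+1}(1-R) - o(1)$, with $o(1)$ slack of order $O_L(1/(n\varepsilon))$ to cover the lower-order terms.

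The key quantitative input is the estimate $h_q(p) - p \le 1/\log_2 q$, obtained by writing $h_q(p) = p\log_q(q-1) + h_2(p)/\log_2 q$ (with $h_2$ the binary entropy function) and invoking $\log_q(q-1)\le 1$ and $h_2(p)\le 1$. Since $\frac{L}{L+1}(1-R) - p = \frac{L}{L+1}\varepsilon$, choosing $C = C(L,R)$ sufficiently large in $\log_2 q = \lceil C/\varepsilon\rceil$ gives $h_q(p) \le p + \varepsilon/C \le \frac{L}{L+1}(1-R) - \Omega_L(\varepsilon)$, with plenty of room to absorb the $o(1)$ correction once $\varepsilon$ is not extremely small relative to $1/n$.

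The argument is essentially a routine first-moment computation; the only conceptual point that requires any thought is the entropy estimate above, showing that the $q$-ary entropy gap $h_q(p) - p$ decays only as $1/\log q$. This slow decay is exactly why the alphabet must scale as $2^{\Theta(1/\varepsilon)}$---polynomially large $q$ would only close the gap by $1/\log\log q$-type factors, far from sufficient. Hence the above choice of $q$ matches the lower bound of Theorem~\ref{thm:main} up to the constant in the exponent.
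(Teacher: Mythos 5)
Your proposal is correct and follows essentially the same route as the paper: a random code of $2q^{Rn}$ uniform codewords, a first-moment bound on the number of bad $(L+1)$-subsets, and expurgation of one codeword per bad subset. The only (cosmetic) difference is in bounding the probability that all $L+1$ codewords of a fixed subset lie within $pn$ of a fixed $y$: you multiply $q$-ary ball volumes and invoke $h_q(p)\le p+1/\log_2 q$, whereas the paper lower-bounds the total agreement $\sum_{\ell}(n-d(c^{(\ell)},y))\ge(1+LR+L\eps)n$ and union-bounds over agreement patterns; the two estimates coincide numerically (the paper's $2^{(L+1)n}$ factor is exactly the $h_2(p)\le 1$ slack in your entropy bound), though the paper's agreement-counting version has the side benefit of extending verbatim to average-radius list decoding.
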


Thus the lower bound in Theorem~\ref{thm:main} is tight up to the constant factor $\alpha_{L,R}$ in the exponent (which we did not try to optimize).
This is perhaps surprising, because Proposition~\ref{pr:codes-exist}  considers the natural random construction, which does \emph{not} give near-optimal alphabet size for unique decoding ($L=1$). In fact, for unique decoding, the alphabet size exponentially far from optimal: $2^{O(1/\varepsilon)}$ compared to the optimal $\poly(1/\varepsilon)$ achieved by AG codes.
  
  We also point out that, as a consequence of recent work \cite{AGL23} on randomly punctured codes (building on \cite{GZ23}), Proposition~\ref{pr:codes-exist} can be achieved not just with random codes (with alterations), but also with \emph{random linear codes}. \cite{AGL23} shows that, for all $L\ge 2$ and $R,\varepsilon\in(0,1)$, random linear codes over alphabet size $2^{10L/\varepsilon}$ are with high probability $(\frac{L}{L+1}(1-R-\varepsilon),L)$-list-decodable.

\section{Preliminaries}
\label{sec:prelims}

We use standard Landau notation $O(\cdot),\Omega(\cdot), \Theta(\cdot)$. We use the notations $O_x(\cdot),\Omega_x(\cdot), \Theta_x(\cdot)$ to mean that a multiplicative factor depending on the variable(s) $x$ is suppressed.
All logs are base-2 unless otherwise specified.
For integers $L$, let $[L]$ denote the set $\{1,\dots,L\}$.
Let $h(x) \defeq -x\log x - (1-x)\log(1-x)$ denote the binary entropy function.
Let $\binom{n}{\le r}\defeq \sum_{i=0}^{r} \binom{n}{i}$.
We have the binomial approximation
\begin{align}
  2^{h(\alpha)n-o(n)}{\poly n}\le \binom{n}{\alpha n}\le \binom{n}{\le\alpha n} \le 2^{h(\alpha)n}
  \label{eq:binomial}
\end{align}
We also use the Chernoff bound for binomials
\begin{align}
  \Pr[\Binomial(\alpha,m) > (1+\delta)\alpha m] \le e^{-\frac{\delta^2}{2+\delta}\alpha m}.
  \label{eq:chernoff}
\end{align}

A \emph{code} $C \subseteq \Sigma^n$ is simply a collection of words of equal length over a fixed alphabet $\Sigma$. The \emph{dimension} of a code $C$ is defined to be $k(C) \defeq \log_{\abs{\Sigma}}{\abs{C}}$ and the \emph{rate} is $R(C)\defeq \frac{k(C)}{n} = \frac{\log_{\abs{\Sigma}}{\abs{C}}}{n}$. The \emph{distance} of a code $C$ is defined to be $d(C) \defeq \min_{c_1 \neq c_2 \in C}{d(c_1, c_2)}$, where $d(\cdot,\cdot)$ denotes the Hamming distance between two words.
We say a code of length $n$ and dimension $k$ is \emph{Maximum Distance Separable (MDS)} if it has minimum distance $n-k+1$.

For a string $c\in \Sigma^n$ and a set $A\subset [n]$, we let $c|_A\in\Sigma^{|A|}$ denote the string $c$ restricted to the indices in $A$.

To highlight the main ideas in Section~\ref{sec:warmups}, we prove special cases of Theorem~\ref{thm:main} for \emph{average-radius-list-decoding}: a code $C\subset[q]^n$ is \emph{$(p,L)$-average-radius-list-decodable} if, for any distinct $L+1$ codewords $c\ind{1},\dots,c\ind{L+1}$ and any vector $y\in[q]^n$, the average Hamming distance from $c\ind{1},\dots,c\ind{L+1}$ to $y$ is strictly greater than $pn$.
We observe that average-radius-list-decoding is a strengthening of (ordinary) list-decoding: any $(p,L)$-average-radius-list-decodable  code is also $(p,L)$-list-decodable.

\section{Technical overview}
\label{sec:warmups}

We now introduce the ideas of our main result, Theorem~\ref{thm:main}.
Our main result generalizes Theorem~\ref{thm:bdg} in four ways, which we outline in Section~\ref{ssec:comparison-BDG}. 
In Section~\ref{ssec:warmup-1}, Section~\ref{ssec:warmup-2}, and Section~\ref{ssec:warmup-3}, we give a few warmup proofs that show (or give a taste of) how we achieve these generalizations.
For exposition, we focus our warmup lower bounds on list size $L=2$ and for average-radius-list-decoding.
In Section~\ref{sec:proof}, we give the full proof of our main result, Theorem~\ref{thm:main}.

\subsection{Comparison to \cite{BDG22}}
\label{ssec:comparison-BDG}
Our work generalizes Theorem~\ref{thm:bdg} in four ways. 
\begin{enumerate}
  \item \textbf{Removing linearity.}
  The proof of Theorem~\ref{thm:bdg} uses that the codes in question have a ``higher order MDS'' (MDS(3)) property, which is a specific property of the columns of the parity-check matrix \cite{Roth22, BGM22}.
  They then show that a small alphabet size contradicts the higher order MDS property, and hence the assumption of the code in Theorem~\ref{thm:bdg}.
  Higher order MDS codes can only be defined for linear codes, and furthermore, the proof of \cite{BDG22} used several aspects of the linearity of the code.
  Thus, on the surface, it may seem like Theorem~\ref{thm:bdg} could not be generalized to non-linear codes.

We show that the linearity assumption is in fact not necessary. 
We show that one can avoid the linear-algebraic aspects of the proof in \cite{BDG22}, and that careful applications of the pigeonhole principle suffice to find a bad list-decoding configuration. 
In our first warmup (Section~\ref{ssec:warmup-1}), we show how to do this (for average-radius-list-decoding).

  \item \textbf{Allowing gap to capacity.}
  While Theorem~\ref{thm:bdg} only proves a lower bound for codes list-decodable exactly up to the generalized Singleton bound, we prove a lower bound even when the code has an $\varepsilon$ gap to capacity, showing an alphabet size lower bound for codes approaching list-of-$L$ capacity, for any $L$.
    In our second warmup (Section~\ref{ssec:warmup-2}), we show how to do this (for average-radius-list-decoding).

  \item \textbf{Removing MDS.}
  In the connection between list-decodable codes and higher order MDS codes, a code is ``MDS($L+1$)'' if and only if it exactly achieves the generalized Singleton bound for all $L'\le L$ \cite{BGM23}.
    The lower bound in Theorem~\ref{thm:bdg} is proved for MDS(3), which requires Theorem~\ref{thm:bdg} to assume our code both (i) meets the generalized Singleton bound for $L=2$ \emph{and} (ii) is MDS.
    Using arguments similar to the Johnson-bound we show that one can get away with a weaker distance assumption than MDS, and by adjusting our pigeonhole argument, we then can eliminate the assumption entirely. 
    We give a taste of how to do this in our third warmup by showing how to remove the MDS/distance assumption for average-radius-list-decoding (Section~\ref{ssec:warmup-3}) --- it is only a taste, as removing the distance assumption is much easier for average-radius-list-decoding than for ordinary list-decoding.

  \item \textbf{Generalizing to larger $L$.}
  In contrast to lower bounds for higher order MDS codes, generalizing the list-decoding lower bounds to larger $L$ is not immediate. \cite{BDG22} proved an alphabet size lower bound of $q\ge 2^{\Omega_R(n)}$ for MDS(3) codes. Since all MDS(3) codes are also MDS($L$) for $L\ge 3$, their lower bound $q\ge 2^{\Omega_R(n)}$ also held for MDS($L$) for all $L\ge 3$.
    However, while the lower bounds for $L=2$ imply lower bounds for larger $L$ in higher order MDS codes, the same is \emph{not} true for lower bounds for list-of-$L$ decoding: optimal list-of-$L$ decoding does not necessarily imply optimal list-of-$L'$ decoding for $L' > L$.

    Generalizing all the above machinery to larger $L$ for average-radius list-decoding follows almost seamlessly from the warmup arguments. However, the list-decoding case is not as immediate. It requires new ideas to remove the distance assumption, care to find the bad list-decoding configuration, and a deliberate balancing of parameters to ensure that all distances from the codewords to the center are simultaneously below the list-decoding radius. 
\end{enumerate}

We now give several warmup proofs which show how to achieve these generalizations.
Warmup 1 (Section~\ref{ssec:warmup-1}) achieves the first generalization (removing linearity), Warmup 2 (Section~\ref{ssec:warmup-2}) achieves the second generalization (incorporating gap to capacity), and Warmup 3 (Section~\ref{ssec:warmup-3}) achieves the third generalization (removing MDS), all for the easier case of average-radius-list-decoding and $L=2$.
The full proof of Theorem~\ref{thm:main} incorporates all of these ideas simultaneously and additionally achieves the fourth generalization (all $L$).

\subsection{Warmup 1: A lower bound for exactly optimal list-of-2 decoding}
\label{ssec:warmup-1}

First, we show how to prove a lower bound for all (not-necessarily-linear) codes.
In other words, we generalize Theorem~\ref{thm:bdg} to a lower bound for all codes. 
We state and prove the lower bound for average-radius-list-decoding, though, as demonstrated by our main result, Theorem~\ref{thm:main}, a similar lower bound holds for ordinary list-decoding.\footnote{Our main result, Theorem~\ref{thm:main}, requires $\varepsilon\ge \Omega(1/n)$, but for such $\varepsilon=\Theta(1/n)$, a $(\frac{2}{3}(1-R),2)$-list-decodable code is certainly $(\frac{2}{3}(1-R-\varepsilon),2)$-list-decodable, so Theorem~\ref{thm:main} implies $q\ge 2^{\Omega_{L,R}(n)}$.}
\begin{proposition}
  For all $R\in(0,1)$, there exists $\alpha_R>0$ such that the following holds for sufficiently large $n$.
  Any MDS code that is $(\frac{2}{3}(1-R),2)$-average-radius-list-decodable must have alphabet size $q\ge 2^{\alpha_R \cdot n}$.
  \label{pr:main-0}
\end{proposition}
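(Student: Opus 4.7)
The plan is to assume, for contradiction, that $q < 2^{\alpha_R n}$ for a sufficiently small constant $\alpha_R > 0$, and to exhibit three codewords $c_1, c_2, c_3 \in C$ together with a center $y \in [q]^n$ satisfying $\sum_{i=1}^{3} d(c_i, y) \le 2(1-R)n$, which contradicts the $(\tfrac{2}{3}(1-R), 2)$-average-radius-list-decoding assumption. The starting point is the generalized Singleton pigeonhole of Figure~\ref{fig:singleton}: for an MDS code it already yields a triple and a center with $\sum_i d(c_i,y) = 2(1-R)n + O(1)$, so one only needs to shave off a bounded additive amount, and the small-$q$ assumption should enable this.

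First, I would pigeonhole on a prefix of length $k-s = Rn - s$ (where $s \ge 3$ is a parameter to be chosen): since $|C| = q^{Rn}$, some value of the first $Rn - s$ coordinates is shared by at least $q^s$ codewords, and by the MDS property these $q^s$ codewords restricted to the remaining $N := (1-R)n + s$ ``suffix'' coordinates form an MDS sub-code of dimension $s$ on $N$ coordinates; in particular, any two of them agree on at most $s - 1$ suffix coordinates. Next, I would find a triple $c_1, c_2, c_3$ in this sub-code with $2 n_3^{\mathrm{suf}} + n_2^{\mathrm{suf}} \ge 2s$, where $n_3^{\mathrm{suf}}$ and $n_2^{\mathrm{suf}}$ count the suffix coordinates on which, respectively, all three or exactly two of $c_1, c_2, c_3$ agree. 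Setting $y$ to the shared prefix on the first $Rn - s$ coordinates and to the plurality value on each suffix coordinate, a direct computation gives
\[\sum_{i} d(c_i, y) \;=\; 2(1-R)n + 2s - (2 n_3^{\mathrm{suf}} + n_2^{\mathrm{suf}}) \;\le\; 2(1-R)n,\]
as required.

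The main obstacle is producing such a triple inside the MDS sub-code. The pairwise agreement bound gives $3 n_3^{\mathrm{suf}} + n_2^{\mathrm{suf}} = \sum_{i<j} |\{k \in \mathrm{suf} : c_i(k) = c_j(k)\}| \le 3(s-1)$, so $2 n_3^{\mathrm{suf}} + n_2^{\mathrm{suf}} \le 3s - 3$, which just barely exceeds $2s$ when $s \ge 3$; the triple we seek is thus essentially extremal, consisting of three pairs at maximum agreement $s-1$ on disjoint agreement sets. My plan is to exploit the fiber structure of the sub-code: for each $(s-1)$-subset $T$ of suffix coordinates and each $v \in [q]^{s-1}$, the $q$ sub-codewords with $T$-projection $v$ pairwise agree exactly on $T$ (by MDS they differ on every coordinate outside $T$), yielding $\binom{N}{s-1} q^{s-1} \binom{q}{2}$ maximum-agreement pairs in total. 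A counting/pigeonhole argument on these pairs --- using the assumption $q < 2^{\alpha_R n}$ to force that across varying $T$ enough pairs share a common codeword, and that enough such shared-codeword pairs complete into triples with disjoint agreement sets --- then produces the required triple. The balancing of $s$ against $q$ in this combinatorial argument pins down the constant $\alpha_R$.
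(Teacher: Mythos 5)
The reduction to the suffix code and the identity $\sum_i d(c_i,y)=2(1-R)n+2s-(2n_3^{\mathrm{suf}}+n_2^{\mathrm{suf}})$ are correct, but the entire weight of the proof rests on the step you yourself flag as the ``main obstacle'': producing a triple with $2n_3^{\mathrm{suf}}+n_2^{\mathrm{suf}}\ge 2s$. Since the pairwise bound gives $3n_3^{\mathrm{suf}}+n_2^{\mathrm{suf}}\le 3(s-1)$, for $s=3$ you are forced into the \emph{exactly} extremal configuration ($n_3^{\mathrm{suf}}=0$, three pairwise agreement sets of size $s-1$ that are pairwise disjoint), and even for larger $s$ you have almost no slack. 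The counting plan you sketch does not establish that this configuration exists, and it cannot be a routine first-moment/pigeonhole count: every codeword of the dimension-$s$ MDS suffix code sits in exactly $\binom{N}{s-1}(q-1)$ maximum-agreement pairs regardless of $q$, so the quantities you propose to count do not by themselves distinguish small $q$ from large $q$. The distinction must come from correlations among the fibers, and indeed for generic Reed--Solomon codes over fields of size $2^{\Theta(n)}$ your triple provably does not exist (for $s=3$ it would force three quadratics $(x-a_1)(x-a_2)$, $(x-a_3)(x-a_4)$, $(x-a_5)(x-a_6)$ with distinct roots among the evaluation points to be linearly dependent, which a generic evaluation set avoids). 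So the missing step is exactly the hard content of the proposition --- essentially the higher-order-MDS-type question that \cite{BDG22} needed linear algebra to resolve --- and your proposal does not supply it.

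The paper's proof avoids this rigidity by hunting for an \emph{asymmetric} configuration: a central codeword $c$ together with partners $f^{A_1}(c),f^{A_2}(c)$ that each agree with $c$ on a $(k-1)$-set $A_i\subseteq[n]\setminus I_0$, but that only need to agree with \emph{each other} on the two fixed coordinates $I_0=\{1,2\}$; the center $y$ then copies $f^{A_1}(c)$ on $I_0$ and $c$ elsewhere, giving total distance $2+2(n-k-1)=2(n-k)$. Finding this configuration is a clean two-step count: a random $c$ has, for more than half of all $(k-1)$-sets $A$, some partner agreeing with it on $A$ (since $\Pr[\neg\mathscr{E}_A]<1/q$), and then the assumption $q^2<\tfrac12\binom{n-2}{k-1}$ lets one pigeonhole the partners' values on $I_0$; MDS is used only to certify that the two partners are distinct. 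If you want to salvage your route, you would need to either import the \cite{BDG22} machinery to force the degenerate triple, or relax the target configuration to an asymmetric one as the paper does.
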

\begin{proof}
  Fix $I_0=\{1,2\}$.
  Let $\mathcal{F}$ be the collection of all subsets of $[n]\setminus I_0$ of size $k- 1$.
  Clearly $\abs{\mathcal{F}} \ge 2^{\Omega_R(n)}$.
  Thus, it suffices to prove that $q^2 \ge \abs{\mathcal{F}}/2$. 
  Suppose for contradiction that
  \begin{align}
    q^{2} < |\mathcal{F}|/2.
    \label{eq:false-0}
  \end{align}
  Consider picking a uniformly random codeword $c \in C$. 
  For each $A \in \mathcal{F}$, let $\mathscr{E}_A$ be the event that another codeword $c'$ agrees with $c$ on $A$, i.e., $c|_A = c'|_A$. For any $A \in \mathcal{F}$, there at most $q^{k-1}$ possible values of $c|_A$, and thus at most $q^{k-1}$ codewords $c$ for which $c|_A$ uniquely determines $c$. Hence,
  \begin{equation}
      \Pr[\neg\mathscr{E}_A] < \frac{q^{k-1}}{q^k} = \frac{1}{q} \ .
      \label{eq:uniqueness-ub-0}
  \end{equation}
 For each codeword $c$, define the set $\mathcal{F}_c \coloneqq \{A \in \mathcal{F} : \text{$\mathscr{E}_A$ occurs}\}$. For each $A \in \mathcal{F}_c$, we can find, by definition, a codeword $f^A(c) \in C \setminus \{c\}$ such that $f^A(c)|_A = c|_A$. 
  By linearity of expectation and \eqref{eq:uniqueness-ub-0}, we find that
  \begin{equation}
      \E\left[\abs{\mathcal{F}_c}\right] = \E\left[\sum_{A \in \mathcal{F}}{\1\{\mathscr{E}_A\}}\right] > \sum_{A \in \mathcal{F}}{\left(1 - \frac{1}{q}\right)} \ge \frac{\abs{\mathcal{F}}}{2} \ .
  \end{equation}
  Hence we can find a codeword $c \in C$ for which $\abs{\mathcal{F}_c} > \abs{\mathcal{F}}/2$. By pigeonhole and \eqref{eq:false-0}, there are $2$ distinct sets $A_1,A_2 \in \mathcal{F}_c$ such that the codewords $f^{A_1}(c)$ and $f^{A_2}(c)$ agree on the coordinates $I_0$. 
  These two codewords $f^{A_1}(c)$ and $f^{A_2}(c)$ are distinct: if not, then $f^{A_1}(c)=f^{A_2}(c)$ agrees with $c$ on at least $\abs{A_1 \cup A_2} \ge k$ coordinates, contradicting the assumption that the distance is at least $n-k+1$.

  Let $y$ be the word which agrees with $f^{A_1}(c)$ on $I_0$ (and thus $f^{A_2}(c)$ as well), and agrees with $c$ everywhere else.
  Word $y$ has total distance at most
  \begin{equation}
      \abs{I_0}+ \abs{[n] \setminus (I_0 \cup A_1)} + \abs{[n] \setminus (I_0 \cup A_2)} = 2 + (n-k-1) + (n-k-1) = 2(n-k)
      \label{eq:avg-dist-0}
  \end{equation}
  from codewords $c,f^{A_1}(c),f^{A_2}(c)$, contradicting average-radius-list-decoding.
  Thus, \eqref{eq:false-1} is false, which means $2q^2\ge |\mathcal{F}|\ge 2^{\Omega_R(n)}$, and so $q\ge 2^{\Omega_R(n)}$.
\end{proof}

\subsection{Warmup 2: Relaxing by $\varepsilon$}
\label{ssec:warmup-2}

\begin{figure}
  \begin{center}
    \newcommand\dashvert[2]{\draw[dashed] (#1,-1.5) -- (#1,#2);}
    \newcommand\tick[1]{\draw (#1,-1.6) -- (#1,-1.4); \dashvert{#1}{1.5}} 
    \begin{tikzpicture}[scale=0.8]
        \def\ep{0.5}
        \def\ii{3}
        \def\ca{gray}
        \def\cb{yellow!40}
        \def\cc{orange!40}
        \def\cd{cyan}
        \def\ce{red!40}
          
        \draw[fill=\ce] (-\ep,2) rectangle (16,2.5);
        \draw[fill=\ca] (-\ep,1) rectangle (16,1.5);
        \draw[fill=\ca] (-\ep,0) rectangle (16,0.5);
        \draw[fill=\ce] (-\ep,-1) rectangle (16,-.5);    
        
        \draw[fill=\cd] (-\ep,1) rectangle (0,1.5);
        \draw[fill=\cd] (-\ep,0) rectangle (0,0.5);
        \draw[fill=\cd] (-\ep,-1) rectangle (0,-.5);
        \draw[fill=\ce] (1.5*\ii,1) rectangle (3.5*\ii,1.5);
        \draw[fill=\ce] (2.9*\ii,0) rectangle (4.9*\ii,0.5);
        
        \tick{-\ep}
        \tick{0}
        \tick{1.5*\ii}
        \tick{3.5*\ii}
        \draw (-\ep,-1.5) -- (16,-1.5);
        \draw[line width=3pt] (-\ep,-1) rectangle (16,-.5);

        \node at (-3*\ep,2.25) {$c$};
        \node at (-3*\ep,1.25) {$f^{A_1}(c)$};
        \node at (-3*\ep,0.25) {$f^{A_2}(c)$};
        \node at (-3*\ep,-0.75) {$y$};
        \node at (-0.5*\ep,3) {$I_0$};
        \node at (-0.5*\ep,-2) {$4\varepsilon n$};
        \node at (2.5*\ii,-2) {$(R-\varepsilon)n$};
        \node at (2.5*\ii,1.25) {$A_1$};
        \node at (3.9*\ii,0.25) {$A_2$};
  \end{tikzpicture}
  \end{center}
  \caption{The bad average-radius-list-decoding configuration we search for in Proposition~\ref{pr:main-1}. The list-decoding center $y$ has distances $4\varepsilon n$, $(1-R-3\varepsilon)n$, and $(1-R-3\varepsilon)n$ from codewords $c_0$, $c_1$, and $c_2$ respectively. 
  }
  \label{fig:agreement-0} 
\end{figure}

Next, we show how to prove an alphabet size lower bound of $2^{\Omega_R(1/\varepsilon)}$ (Proposition~\ref{pr:main-1}), assuming the code has a gap-to-capacity of $\varepsilon$, for average-radius-list-of-2 decoding. We additionally assume the code has near-optimal minimum distance, similar to Proposition~\ref{pr:main-0} and Theorem~\ref{thm:bdg}, and then show how one can remove it in Section~\ref{ssec:warmup-3}. 

The proof of Proposition~\ref{pr:main-1} follows nearly the same blueprint as the proof of Proposition~\ref{pr:main-0}. The new addition will be increasing the size of $I_0$ to be $\Omega(\varepsilon n)$. That way, the bound in~\eqref{eq:avg-dist-0} decreases by a factor of $2\varepsilon n$ so that it still contradicts average-radius-list-decodability. In exchange, the bound of $2q^2 \le \abs{\mathcal{F}}$ is altered to become $2q^{\abs{I_0}} \le \abs{\mathcal{F}}$.

\begin{proposition}
  For all $R\in(0,1)$, there exists $\alpha_R>0$ such that the following holds for all sufficiently large $n$ and all $\varepsilon\ge 1/n$.
  Let $C$ be a code of rate $R$ with alphabet size $q$ that has minimum distance greater than $(1-R-\varepsilon) n$ and is $(\frac{2}{3}(1-R-\varepsilon),2)$-average-radius-list-decodable. Then $q\ge 2^{\alpha_R/\varepsilon}$.
  \label{pr:main-1}
\end{proposition}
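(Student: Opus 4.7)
The plan is to follow the blueprint of the proof of Proposition~\ref{pr:main-0}, absorbing the $\varepsilon$-gap to capacity by enlarging $I_0$ from size $2$ to size $\Theta(\varepsilon n)$. Concretely, I will fix $I_0 := [s]$ with $s := 4\lceil \varepsilon n\rceil + 2$ and $m := \lfloor (R-\varepsilon) n\rfloor$, and take $\mathcal{F}$ to be the collection of all $m$-subsets of $[n]\setminus I_0$. These parameters are chosen so that $s + 2(n - s - m) \le 2(1-R-\varepsilon) n = 3p n$ with $p = \tfrac23(1-R-\varepsilon)$. Hence, if I can locate three pairwise distinct codewords $c, f_1, f_2$ for which $f_i|_{A_i} = c|_{A_i}$ for some $A_i\in\mathcal{F}$ and $f_1|_{I_0} = f_2|_{I_0}$, the center $y$ equal to $f_1$ on $I_0$ and to $c$ elsewhere (exactly as in Figure~\ref{fig:agreement-0}) will have $d(y,c)+d(y,f_1)+d(y,f_2) \le 3pn$, contradicting $(p,2)$-average-radius-list-decodability.

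To produce such a configuration I would rerun the random-codeword computation from Proposition~\ref{pr:main-0}. For each $A\in\mathcal{F}$ and a uniformly random codeword $c$, at most $q^{|A|}=q^m$ codewords have their $A$-restriction uniquely identifying them, so $\Pr[\neg\mathscr{E}_A]\le q^{m-k}\le q^{-\varepsilon n}\le 1/2$ (using $\varepsilon n\ge 1$ and $q\ge 2$). Linearity of expectation then yields a codeword $c$ with $|\mathcal{F}_c|\ge |\mathcal{F}|/2$, and pigeonholing the map $A\mapsto f^A(c)|_{I_0}\in[q]^s$ over $\mathcal{F}_c$ extracts a value $v\in[q]^s$ whose preimage has size at least $|\mathcal{F}|/(2q^s)$.

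The genuinely new obstacle, versus Proposition~\ref{pr:main-0}, is forcing the two selected witnesses to be \emph{distinct}: the MDS assumption there made this automatic from $A_1\ne A_2$ via $|A_1\cup A_2|\ge k$, but the relaxed distance hypothesis only gives $|A_1\cup A_2|\ge m+1$, which is smaller than the $(R+\varepsilon)n-1$ agreement the distance assumption permits. I will overcome this with a second-level counting step: by the distance hypothesis, every codeword $f\ne c$ agrees with $c$ on fewer than $(R+\varepsilon)n$ coordinates, so the number of $A\in\mathcal{F}$ for which $f^A(c)=f$ is at most $\binom{(R+\varepsilon)n-1}{m}$. Therefore as long as $2q^s\cdot\binom{(R+\varepsilon)n-1}{m}<|\mathcal{F}|$, the preimage of $v$ cannot all land on a single codeword, and the required $A_1,A_2$ with $f^{A_1}(c)\ne f^{A_2}(c)$ must exist.

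Finally, the standard binomial estimates $|\mathcal{F}|\ge 2^{h(R)n-O_R(\varepsilon n)}$ and $\binom{(R+\varepsilon)n-1}{(R-\varepsilon)n}\le 2^{O(\varepsilon n\log(1/\varepsilon))}$ reduce the displayed inequality, for all $\varepsilon$ below a threshold depending on $R$, to $q^{O(\varepsilon n)} < 2^{h(R)n/2}$, i.e.\ $q < 2^{\Omega_R(1/\varepsilon)}$. Choosing $\alpha_R$ to be a sufficiently small constant depending on $R$ then supplies the sought contradiction, while for $\varepsilon$ above the threshold the bound is trivial since $q\ge 2$. The step I expect to require the most care is precisely this second-level counting combined with integer rounding: I will need to check that $\binom{(R+\varepsilon)n-1}{(R-\varepsilon)n}$ remains swamped by $|\mathcal{F}|$ uniformly in $\varepsilon\ge 1/n$, and that the integer versions of $s$ and $m$ still satisfy $s+2(n-s-m)\le 2(1-R-\varepsilon)n$ (the $+2$ in $s$ is included precisely to absorb the floor in $m$).
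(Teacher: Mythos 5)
Your proposal is correct, and it follows the paper's overall blueprint (enlarge $I_0$ to $\Theta(\varepsilon n)$, the random-codeword/pigeonhole step, and the same center $y$), but it handles the one genuinely new difficulty --- forcing $f^{A_1}(c)\neq f^{A_2}(c)$ --- by a different mechanism. The paper does not take $\mathcal{F}$ to be \emph{all} $(R-\varepsilon)n$-subsets of $[n]\setminus I_0$; it greedily extracts a subfamily of $\binom{(1-4\varepsilon)n}{(R-\varepsilon)n}/\binom{(1-4\varepsilon)n}{\le 2\varepsilon n}^2$ sets whose pairwise unions all have size at least $(R+\varepsilon)n$, so that $f^{A_1}(c)=f^{A_2}(c)$ would immediately force an agreement with $c$ on $|A_1\cup A_2|\ge(R+\varepsilon)n$ coordinates, contradicting the distance. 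You instead keep all subsets and bound the multiplicity of the map $A\mapsto f^A(c)$: since any fixed $f\neq c$ agrees with $c$ on fewer than $(R+\varepsilon)n$ coordinates, at most $\binom{(R+\varepsilon)n-1}{m}=\binom{(R+\varepsilon)n-1}{\le 2\varepsilon n}$-many $A$ can map to it, and this factor is swamped by $|\mathcal{F}|$ exactly as the paper's greedy divisor is. The two are quantitatively equivalent (both lose a $2^{O(\varepsilon n\log(1/\varepsilon))}$ factor against $2^{h(R)n-O_R(\varepsilon n)}$), and your rounding choices ($s=4\lceil\varepsilon n\rceil+2$, $m=\lfloor(R-\varepsilon)n\rfloor$) do make $s+2(n-s-m)\le 2(1-R-\varepsilon)n$ check out; you also correctly dispatch large $\varepsilon$ trivially via $q\ge 2$. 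The one thing the paper's formulation buys that yours does not is extensibility: in the full Theorem~\ref{thm:main-1} the distance assumption is far weaker, the per-codeword multiplicity bound becomes useless, and one genuinely needs the pre-packaged family with large $W$-wise unions (Lemma~\ref{lem:sets}); your multiplicity count is the more direct route for this proposition but would not survive that relaxation.
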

\begin{proof}
  Since we allow $\varepsilon\ge 1/n$, we need to deal with rounding issues. 
  It suffices to consider the case where $R$ and $\varepsilon$ are both integer multiples of $3/n$: if the Proposition is true in these cases with constant $\alpha_R'$, and we are give a code $C$ of rate $R$ and satisfying the Proposition's conditions for some $\varepsilon$, we can round $R$ down to the nearest multiple of $3/n$, call it  $R'$,  and round $\varepsilon+3/n$ up to the next multiple of $3/n$, call it $\varepsilon'$. 
  Taking $C'$ to be any subcode of $C$ of rate $R'$, we can check $C'$ and $\varepsilon'$ satisfy the requirements of the Proposition (as $1-R-\varepsilon > 1-R'-\varepsilon'$), so $q\ge 2^{\alpha'_R/\varepsilon'}\ge 2^{\alpha_R/\varepsilon}$ where $\alpha_R = \alpha'_R/10$.
  Thus, we assume for the rest of the proof $R$ and $\varepsilon$ are multiples of $3/n$ (so that everything that ``should'' be an integer is an integer).

By adjusting $\alpha_R$, it suffices to consider $\varepsilon$ sufficiently small compared to $R$.
  Fix $I_0=\{1,2,\dots,4\varepsilon n\}$.
  Let $\alpha \coloneqq R-\varepsilon$ and $\beta \coloneqq R+\varepsilon$.
  For any two subsets $A, B \subseteq [n] \setminus I_0$ satisfying $\abs{A} = \abs{B} = \alpha n$ and $\abs{A \cup B} \le \beta n$, notice that $\abs{A \setminus B} = \abs{B \setminus A} \le (\beta - \alpha)n = 2\varepsilon n$. Since the tuple of sets $(A, A \setminus B, B \setminus A)$ determine $B$, the number of possible subsets $B$ for any given $A$ is therefore at most $\binom{(1-4\varepsilon)n}{\le 2\varepsilon n}^2$. Thus, by greedily choosing subsets, we can find a family $\mathcal{F}$ of $\binom{(1-4\varepsilon)n}{\alpha n} / \binom{(1-4\varepsilon)n}{\le 2\varepsilon n}^2 \ge 2^{\Omega_R(n)}$ subsets of $[n] \setminus I_0$ such that each subset has size $\alpha n$ and any pairwise union has size at least $\beta n$.

  It suffices to prove that $q^{|I_0|}\ge |\mathcal{F}|/2$.
  Suppose for contradiction that
  \begin{align}
    q^{|I_0|} < \abs{\mathcal{F}} / 2.
    \label{eq:false-1}
  \end{align}
  Consider picking a uniformly random codeword $c \in C$. 
  For each $A \in \mathcal{F}$, let $\mathscr{E}_A$ be the event that another codeword $c'$ agrees with $c$ on $A$, i.e., $c|_A = c'|_A$. For any $A \in \mathcal{F}$, there at most $q^{\alpha n}$ possible values of $c|_A$, and thus at most $q^{\alpha n}$ codewords $c$ for which $c|_A$ uniquely determines $c$. Hence,
  \begin{equation}
      \Pr[\neg\mathscr{E}_A] < \frac{q^{\alpha n}}{q^k} = \frac{1}{q^{\varepsilon n}} \ .
      \label{eq:uniqueness-ub-1}
  \end{equation}
For each codeword $c$, define the set $\mathcal{F}_c \coloneqq \{A \in \mathcal{F} : \text{$\mathscr{E}_A$ occurs}\}$. For each $A \in \mathcal{F}_c$, we can find, by definition, a codeword $f^A(c) \in C \setminus \{c\}$ such that $f^A(c)|_A = c|_A$.
  By linearity of expectation and \eqref{eq:uniqueness-ub-1}, we thus find that
  \begin{equation}
      \E\left[\abs{\mathcal{F}_c}\right] = \E\left[\sum_{A \in \mathcal{F}}{\1\{\mathscr{E}_A\}}\right] > \sum_{A \in \mathcal{F}}{\left(1 - \frac{1}{q^{\varepsilon n}}\right)} \ge \frac{\abs{\mathcal{F}}}{2} \ .
  \end{equation}
  Hence we can find a codeword $c \in C$ for which $\abs{\mathcal{F}_c} > \abs{\mathcal{F}}/2$. By pigeonhole and \eqref{eq:false-1}, there are $2$ distinct sets $A_1,A_2 \in \mathcal{F}_c$ such that the codewords $f^{A_1}(c)$ and $f^{A_2}(c)$ agree on the coordinates $I_0$. 
  These two codewords $f^{A_1}(c)$ and $f^{A_2}(c)$ are distinct: if not, then $f^{A_1}(c)=f^{A_2}(c)$ agrees with $c$ on at least $\abs{A_1 \cup A_2} \ge \beta n = (R + \varepsilon)n$ coordinates, contradicting the assumption that the distance is greater than $(1-R-\varepsilon)n$.

  Let $y$ be the word which agrees with $f^{A_1}(c)$ on $I_0$ (and thus $f^{A_2}(c)$ as well), and agrees with $c$ everywhere else (see Figure~\ref{fig:agreement-0}).
  Word $y$ has total distance at most
  \begin{align}
      \abs{I_0}+ \abs{[n] \setminus (I_0 \cup A_1)} + \abs{[n] \setminus (I_0 \cup A_2)} &= 4\varepsilon n + (1-R-3\varepsilon)n + (1-R-3\varepsilon)n \nonumber \\
      &= 2(1-R-\varepsilon)n
      \label{eq:avg-dist-1}
  \end{align}
  from codewords $c,f^{A_1}(c),f^{A_2}(c)$, contradicting average-radius-list-decoding.
  Thus, \eqref{eq:false-1} is false, which means $2q^{4\varepsilon n} \ge |\mathcal{F}|\ge 2^{\Omega_R(n)}$, and so $q\ge 2^{\Omega_R(1/\varepsilon)}$.
\end{proof}

\subsection{Warmup 3: Removing the distance assumption}
\label{ssec:warmup-3}

In this section, we prove Proposition~\ref{pr:main-2}, which is the same as Proposition~\ref{pr:main-1} but with the distance assumption removed. To remove it, we simply observe that the minimum distance is already nearly satisfied in any $(\frac{2}{3}(1-R-\varepsilon),2)$-average-radius-list-decodable code. This is \emph{not} true for ordinary list-decoding, so we need additional ideas to remove the minimum distance condition in the general lower bound, Theorem~\ref{thm:main}, but we include this much simplier proof to illustrate the high level structure of the proof. 

\begin{proposition}
  For all $R\in(0,1)$, there exists $\alpha_R>0$ such that the following holds for all $\varepsilon\in(0,1)$ and all sufficiently large $n\ge\Omega_R(1/\varepsilon)$.
  Let $C$ be a code of rate $R$ with alphabet size $q$ that is $(\frac{2}{3}(1-R-\varepsilon),2)$-average-radius-list-decodable. Then $q\ge 2^{\alpha_R/\varepsilon}$.
  \label{pr:main-2}
\end{proposition}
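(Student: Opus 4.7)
The plan is to reduce Proposition~\ref{pr:main-2} to Proposition~\ref{pr:main-1} by extracting a large subcode $C''\subseteq C$ that additionally satisfies the minimum-distance hypothesis required by Proposition~\ref{pr:main-1}. The key observation enabling this reduction is that the average-radius-list-decoding condition, applied with a codeword itself as the list-decoding center, already severely restricts how many close pairs can appear in $C$.

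I would first show that each codeword $c\in C$ has at most one other codeword $c'\in C$ at Hamming distance $\le (1-R-\varepsilon)n$. Suppose for contradiction that $c$ had two such distinct close neighbors $c_1$ and $c_2$. Applying the $(\tfrac{2}{3}(1-R-\varepsilon),2)$-average-radius condition to the triple $c,c_1,c_2$ with center $y=c$ gives $d(c,c_1)+d(c,c_2)\le 2(1-R-\varepsilon)n$, contradicting the strict inequality $>2(1-R-\varepsilon)n$ required by average-radius-list-decodability. Hence the graph on $C$ whose edges are pairs at distance $\le(1-R-\varepsilon)n$ has maximum degree one, so it decomposes as a matching $M$ together with isolated vertices. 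Deleting one endpoint of each edge of $M$ yields a subcode $C''\subseteq C$ with $|C''|\ge |C|/2$ in which any two distinct codewords lie at distance $>(1-R-\varepsilon)n$.

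Finally I would apply Proposition~\ref{pr:main-1} to $C''$. Since $|C''|\ge |C|/2$, the rate $R'$ of $C''$ satisfies $R'\ge R-1/n$; setting $\varepsilon'\defeq \varepsilon+(R-R')$ makes $1-R'-\varepsilon'=1-R-\varepsilon$, so $C''$ has minimum distance $>(1-R'-\varepsilon')n$ and is $(\tfrac{2}{3}(1-R'-\varepsilon'),2)$-average-radius-list-decodable. Proposition~\ref{pr:main-1} then yields $q\ge 2^{\alpha_{R'}/\varepsilon'}$. The hypothesis $n\ge\Omega_R(1/\varepsilon)$ ensures $1/n\le O_R(\varepsilon)$, so $\varepsilon'\le O_R(\varepsilon)$ and $R'$ stays bounded away from $0$; absorbing these corrections into a new constant gives the claimed bound $q\ge 2^{\alpha_R/\varepsilon}$. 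The main subtlety is just this parameter bookkeeping, which goes through because the $1/n$ slack is uniformly smaller than $\varepsilon$ in the hypothesized regime, and because the constant $\alpha_{R'}$ from Proposition~\ref{pr:main-1} is bounded below on any closed subinterval of $(0,1)$ of rates.
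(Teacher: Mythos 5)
Your proposal is correct and takes essentially the same route as the paper: your degree-at-most-one observation is exactly the paper's Lemma~\ref{lem:distance-0} (proved identically, by centering at $c$ itself), and the reduction to Proposition~\ref{pr:main-1} via a half-size subcode with rate loss at most $1/n$ is the same. The only difference is cosmetic parameter bookkeeping (your $\varepsilon'=\varepsilon+(R-R')$ versus the paper's $\varepsilon'=\varepsilon+1/n$), both of which are fine.
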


To prove Proposition~\ref{pr:main-2}, we need the following simple lemma.

\begin{lemma}
  In any $(p,2)$-average-radius-list-decodable code, each codeword has at most 1 other codeword within distance $\frac{3p}{2} n$.
  \label{lem:distance-0}
\end{lemma}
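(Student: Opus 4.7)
The plan is to argue by contradiction directly from the definition of average-radius-list-decoding with $L=2$. Suppose that some codeword $c_0 \in C$ has two distinct other codewords $c_1, c_2 \in C$, both within Hamming distance $\frac{3p}{2} n$ of $c_0$. To produce a forbidden configuration, I would take the list-decoding center to be $y = c_0$ itself. Then the three (distinct) codewords $c_0, c_1, c_2$ have distances $0$, $d(c_0,c_1)$, and $d(c_0,c_2)$ to $y$, so their sum is at most $0 + \frac{3p}{2} n + \frac{3p}{2} n = 3pn$, giving an average distance of at most $pn$.

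This contradicts $(p,2)$-average-radius-list-decodability, which requires the average distance from any $L+1=3$ distinct codewords to any point $y$ to be \emph{strictly} greater than $pn$. Hence no codeword can have two other codewords within distance $\frac{3p}{2} n$, which is what we wanted to show. There is no real obstacle here: the only subtlety is remembering to use the codeword itself as the center (so one of the three distances is $0$), which is exactly what makes the budget of $3pn$ feasible. Note that this short argument crucially exploits the averaging inherent in average-radius-list-decoding; the analogous statement for ordinary list-decoding fails, which is precisely why Section~\ref{ssec:warmup-3} only gives a ``taste'' of how to remove the distance assumption in the full Theorem~\ref{thm:main}.
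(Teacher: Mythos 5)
Your proof is correct and is essentially identical to the paper's own argument: both take the center $y$ to be the codeword $c_0$ itself, so the three distances sum to at most $3pn$ and the average is at most $pn$, contradicting the strict inequality in the definition of average-radius-list-decodability. Nothing further is needed.
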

\begin{proof}
  If there are 2 codewords $c_1$ and $c_2$ both within distance $3pn/2$ of codeword $c$, then the word $c$ has average distance $pn$ from the codewords $c,c_1,c_2$, contradicting $(p,2)$-average-radius-list-decodability.
\end{proof}

Now we can prove Proposition~\ref{pr:main-2}.
\begin{proof}[Proof of Proposition~\ref{pr:main-2}]
  Let $C$ be a $(\frac{2}{3}(1-R-\varepsilon),2)$-average-radius-list-decodable code.
  By Lemma~\ref{lem:distance-0}, each codeword has at most 1 other codeword within distance $(1-R-\varepsilon)n$.
  Thus, by choosing codewords greedily, $C$ has a subcode $C'$ of size at least $|C|/2$ that both is $(\frac{2}{3}(1-R-\varepsilon),2)$-average-radius-list-decodable and has minimum distance greater than $(1-R-\varepsilon)n$.
  Subcode $C'$ has rate at least $R'=R-(1/n)$.
  Applying Proposition~\ref{pr:main-1} with subcode $C'$, rate $R'=R-(1/n)$, and $\varepsilon'=\varepsilon + (1/n)$ gives the desired bound on the alphabet size $q$.
\end{proof}

\section{The full lower bound: all $L$ and (ordinary) list-decoding.}
\label{sec:proof}

We now present the full proof of our main result, Theorem~\ref{thm:main}.
To do so, we need to combine the ideas in the Warmups 1, 2, and 3, and add some additional ideas.

\subsection{Additional Ingredients}

First, we need to generalize the warmups from average-radius list-decoding to (ordinary) list-decoding. To do so, we use similar ideas to \cite{BDG22}, but distill those ideas down to their combinatorial essence.
The idea is to choose our bad list-decoding configuration by first choosing a bad average-radius-list-decoding configuration $y,c_0,c_1,\dots,c_L$, as in the warmups. However, because the list-decoding center $y$ was much closer to $c_0$ than to each of $c_1, \ldots , c_L$, we will instead balance out the distances between them by ``transferring" agreements between $y$ and $c_0$ (of which there are almost $n$) to agreements between $y$ and $c_1,\dots,c_L$, until the $y$ has a similar number of agreements with each of $c_0, c_1, \dots,c_L$.
Specifically, the parts where we will transfer agreements from $c_0$ to $c_1, \ldots , c_L$ will be the intervals $I_1, \ldots I_L$ that we define in the proof: for $p=\frac{L}{L+1}(1-R-\varepsilon)$, we set aside the first $pn$ coordinates for the intervals $I_0, I_1, \ldots , I_L$, meaning that the agreement sets $A_1, \ldots , A_L$ will have to be subsets of $\{pn+1, \ldots , n\}$.

Next, we need to remove the distance requirement for (ordinary) list-decoding, which is more difficult than removing the distance requirement for average-radius-list-decoding in Section~\ref{ssec:warmup-3}.
To do so, we need a lemma similar to Lemma~\ref{lem:distance-0} that shows that a $(p,L)$-list-decodable code has a subcode with large distance and essentially the same rate. 
Clearly, $(p,L)$-list-decoding implies that every codeword is within distance $pn$ of at most $L$ codewords, so we can essentially assume our code has distance $pn$.
For technical reasons, this is not good enough.
In Lemma~\ref{lem:distance-1}, we show, more strongly, that a $(p,L)$-list-decodable code has a large subcode with distance $(p+\frac{p^L}{2L})n$.
Therefore, to show Theorem~\ref{thm:main}, it suffices for us to show it with the additional assumption that the minimum distance is $(p+\frac{p^L}{2L})n$, which is what we show in Theorem~\ref{thm:main-1}. 

To prove Theorem~\ref{thm:main-1}, we need additional ideas. In the warmup arguments, we needed to assume near-optimal minimum distance (namely $(1-R-O(\varepsilon))n$), to show that the codewords $c_1,\dots,c_L$ we find via the pigeonhole argument are pairwise distinct. 
To accommodate our weaker assumption of distance $p+\frac{p^L}{2L}$, we note that our pigeonholing actually gives substantially more than $L$ codewords, and in particular, certainly at least $L\cdot W$ codewords for some constant $W$.
Then, it is enough to show that our pigeonholing never produces $W$ equal codewords (rather than two equal codewords).
Here, a relaxed condition on our set system $\mathcal{F}$ suffices, namely that the sets have very large $W$-wise, rather than pairwise, unions (see Lemma~\ref{lem:sets} below).

We now present the lemmas described above. The first one implies that any $(p,L)$-list-decodable code has a large subcode with distance $(p+\frac{p^L}{2L})n$. 
A similar lemma appears in \cite[Theorem 15]{GN14}, and a lemma simliar in spirit appears in \cite[Theorem 6.1]{GST22}. 
We defer the proof to Appendix~\ref{app:distance-1}. 

\begin{lemma}
  Let $p\in(0,1)$.
  In any $(p,L)$-list-decodable code $C$, every codeword is within relative distance $\alpha := p+\frac{p^L}{2L}$ of at most $L'= O(L^2/p)$ codewords.
  Consequently, $C$ has a subcode of size at least $|C|/(L'+1)$ that has relative distance at least $\alpha$. 
  \label{lem:distance-1}
\end{lemma}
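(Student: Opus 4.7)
The plan is to prove the first statement of the lemma by contradiction and then derive the subcode conclusion by a standard independent-set argument. Fix a codeword $c_0$ and suppose, for contradiction, that there are $M \ge L'+1$ other codewords $c_1,\ldots,c_M$ within relative distance $\alpha := p + \tfrac{p^L}{2L}$ of $c_0$. The aim is to produce a word $y \in \Sigma^n$ and an $L$-subset $\{i_1,\ldots,i_L\} \subseteq [M]$ with
$d(y,c_0), d(y,c_{i_1}), \ldots, d(y,c_{i_L}) \le pn$ simultaneously, placing $L+1$ distinct codewords in the Hamming ball of radius $pn$ around $y$ and contradicting $(p,L)$-list-decodability.

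The construction of $y$ goes as follows. For each $i$, let $D_i = \{k : c_i(k) \neq c_0(k)\}$, so $|D_i| \le \alpha n$. Note that $(p,L)$-list-decodability already forces at most $L-1$ of the $c_i$ to lie within distance $pn$ of $c_0$, so I may assume $|D_i| \in (pn, \alpha n]$ for all but at most $L-1$ of them. Given $L$ such neighbors, I take $y$ to equal $c_0$ outside a coordinate set $T$, and on $T$ set $y(k)$ to a common value that $c_{i_1},\ldots,c_{i_L}$ all share (and that differs from $c_0(k)$). A direct accounting shows that the three conditions $d(y,c_0) \le pn$ and $d(y,c_{i_j}) \le pn$ are simultaneously achievable iff $T$ lies in the ``value-matched'' $L$-wise intersection $E_{i_1,\ldots,i_L} := \{k \in \bigcap_j D_{i_j} : c_{i_1}(k)=\cdots=c_{i_L}(k)\}$, with $|T| \ge \max_j |D_{i_j}| - pn$. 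In particular, one needs an $L$-tuple with $|E_{i_1,\ldots,i_L}| \ge (\alpha - p)n = \tfrac{p^L}{2L} n$. The combinatorial heart of the proof is to produce such an $L$-tuple. I plan to do this by a Johnson-bound-style dichotomy: either the ``value-matched'' overlap structure is rich enough---in which case a double-counting argument over coordinates, values, and $L$-subsets (bounding $\sum_{(i_1,\ldots,i_L)} |E_{i_1,\ldots,i_L}|$ from below via Jensen's inequality applied to the per-coordinate value multiplicities) yields the desired $L$-tuple when $M \ge \Omega(L^2/p)$---or else the disagreement patterns of the $c_i$'s are ``spread out'' across coordinates and alphabet symbols, in which case a sphere-packing/Plotkin-style bound of the form $\sum_i |D_i| \le M \alpha n$ directly constrains $M$ to be $O(L^2/p)$.

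The consequent (subcode of distance greater than $\alpha n$ and size $\ge |C|/(L'+1)$) is a standard independent-set argument: form the graph $G$ on vertex set $C$ with edges joining codewords at distance $\le \alpha n$; the first part bounds the maximum degree of $G$ by $L'$, and iteratively removing a vertex together with its $\le L'$ neighbors leaves an independent set of size $\ge |C|/(L'+1)$, which is exactly a subcode of minimum distance exceeding $\alpha n$.

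The main obstacle is the combinatorial extraction step. Unlike the average-radius case (where bounding a \emph{sum} of distances suffices and pairwise union bounds essentially work), here one must ensure \emph{every} one of the $L+1$ distances $d(y,c_0), d(y,c_{i_j})$ is at most $pn$. This forces the hunt for a specific ``value-matched'' $L$-wise intersection structure---rather than a generic unweighted intersection---and it is precisely this stronger requirement that necessitates the delicate Johnson-type dichotomy described above, and which produces the quadratic-in-$L$ factor in $L' = O(L^2/p)$ while keeping the bound independent of the alphabet size $q$.
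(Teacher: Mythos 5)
Your overall skeleton matches the paper's: fix $c_0$, suppose many codewords lie within relative distance $\alpha$ of it, extract $L$ of them whose disagreement sets with $c_0$ overlap heavily, and build a center $y$ within distance $pn$ of all $L+1$ codewords; the greedy independent-set argument for the subcode is also exactly what the paper does. The gap is in the extraction step, and it stems from an unnecessary restriction you place on $y$: you only consider centers that take a \emph{single common value} on each coordinate of $T$, which forces you to hunt for a value-matched $L$-wise intersection $E_{i_1,\dots,i_L}$ of size at least $(\alpha-p)n$. Your claimed ``iff'' is false for general $y$. Instead, it suffices to find an \emph{unweighted} common intersection $\bigcap_j D_{i_j}$ of size at least $L(\alpha-p)n=\tfrac{p^L}{2}n$, partition it into $L$ pairwise disjoint blocks $S_1,\dots,S_L$ each of size $(\alpha-p)n$, and let $y$ agree with $c_{i_j}$ on $S_j$ and with $c_0$ elsewhere. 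Then $d(y,c_0)\le|S_1\cup\cdots\cup S_L|\le\tfrac{p^L}{2}n<pn$ and $d(y,c_{i_j})\le|D_{i_j}|-|S_j|\le \alpha n-(\alpha-p)n=pn$, with no value-matching needed. This is the paper's construction, and the required large unweighted intersection follows from a clean double count: with $a_i$ the number of the $M$ codewords disagreeing with $c_0$ at coordinate $i$ (so $\sum_i a_i\ge Mpn$), convexity gives $\binom{M}{L}\max_{j_1<\cdots<j_L}\bigl|\bigcap_\ell D_{j_\ell}\bigr|\ge\sum_i\binom{a_i}{L}\ge n\binom{pM}{L}$, which yields the bound for $M=\Theta(L^2/p)$.

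As written, your value-matched route does not close. Applying Jensen to the per-coordinate, per-value multiplicities $a_{i,v}$ lower-bounds $\sum_{i,v}\binom{a_{i,v}}{L}$ by a quantity that degenerates as the alphabet grows: the mass $\sum_j|D_j|\ge Mpn$ can be spread over $q-1$ values per coordinate, so the worst case is $n(q-1)\binom{pM/(q-1)}{L}$, which vanishes for large $q$. Hence the ``rich overlap'' branch of your dichotomy would need $M$ to grow with $q$, defeating the stated goal of an alphabet-independent $L'$. The fallback branch is not an argument: the inequality $\sum_i|D_i|\le M\alpha n$ holds trivially by hypothesis and imposes no constraint whatsoever on $M$. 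Dropping the value-matching requirement as above is the missing idea; everything else in your outline (the reduction to nonzero-weight neighbors, the threshold $|D_i|>pn$ for all but at most $L-1$ of them, and the final greedy subcode extraction) is sound and agrees with the paper.
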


The next lemma says that, for sufficiently large $W$, we can choose a large family of sets with very large $W$-wise unions. The proof is a straightforward probabilistic argument, and we defer it to Appendix~\ref{app:sets}.

\begin{lemma}
  For all $1>\beta>\alpha>0$, for all positive integers $m$, there exists a constant $W = O\left(\log(1-\beta)/\log(1-\alpha)\right)$ and a family $\mathcal{F}$ of $2^{\Omega(m(1-\beta) \log(1-\alpha)/\log(1-\beta))}$ subsets of $[m]$, each of size $\alpha m$, such that all $W$-wise unions of subsets are of size at least $\beta m$.
  \label{lem:sets}
\end{lemma}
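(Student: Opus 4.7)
\textbf{Proof plan for Lemma~\ref{lem:sets}.} The plan is a direct application of the probabilistic method. Sample $N$ subsets $A_1,\dots,A_N$ independently and uniformly at random from $\binom{[m]}{\alpha m}$, and choose $W$ to be a sufficiently large constant multiple of $\log(1/(1-\beta))/\log(1/(1-\alpha))$. The goal is to show that for $N$ as large as claimed, with positive probability every $W$-wise union $A_{i_1}\cup\cdots\cup A_{i_W}$ has size at least $\beta m$; then $\mathcal{F}=\{A_1,\dots,A_N\}$ satisfies the lemma.

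The key step is bounding the probability that a single fixed $W$-tuple is ``bad.'' The event $|A_{i_1}\cup\cdots\cup A_{i_W}| < \beta m$ is equivalent to asserting that all $W$ subsets are contained in some common set $S\subseteq[m]$ of size $\beta m$. A union bound over the $\binom{m}{\beta m}$ such sets $S$, together with independence of the $A_{i_j}$'s, yields
\[
\Pr\bigl[|A_{i_1}\cup\cdots\cup A_{i_W}| < \beta m\bigr] \;\le\; \binom{m}{\beta m}\left(\frac{\binom{\beta m}{\alpha m}}{\binom{m}{\alpha m}}\right)^{\!W}.
\]
Using the symmetry $\binom{\beta m}{\alpha m}/\binom{m}{\alpha m} = \binom{(1-\alpha)m}{(1-\beta)m}/\binom{m}{(1-\beta)m}$ and writing this as the product $\prod_{i=0}^{(1-\beta)m-1}((1-\alpha)m-i)/(m-i)$, each factor is at most $1-\alpha$, so $\binom{\beta m}{\alpha m}/\binom{m}{\alpha m} \le (1-\alpha)^{(1-\beta)m}$. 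Combined with $\binom{m}{\beta m}\le 2^{h(\beta)m}$ from \eqref{eq:binomial}, the bad-$W$-tuple probability is at most $2^{h(\beta)m}(1-\alpha)^{(1-\beta)mW}$.

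To conclude, I would union-bound over the $\binom{N}{W}\le N^W$ choices of $W$-tuple and require $N^W\cdot 2^{h(\beta)m}(1-\alpha)^{(1-\beta)mW} < 1$, which rearranges to $\log N < (1-\beta)m\log\tfrac{1}{1-\alpha} - h(\beta)m/W$. With the chosen $W$, the ``entropy tax'' $h(\beta)/W$ becomes a small enough fraction of $(1-\beta)\log(1/(1-\alpha))$ that $\log N$ can be taken as $\Omega\bigl(m(1-\beta)\log(1-\alpha)/\log(1-\beta)\bigr)$, matching the claim. The main obstacle I anticipate is calibrating the hidden constant in $W$ so that the asymptotic $W=O(\log(1-\beta)/\log(1-\alpha))$ and the claimed lower bound on $\log N$ hold simultaneously; this is routine entropy bookkeeping and needs no new ideas beyond the probabilistic setup.
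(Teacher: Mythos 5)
Your plan is sound and the single-tuple estimate is correct, but the route differs from the paper's in the one step that matters. You sample uniformly from $\binom{[m]}{\alpha m}$ and bound $\Pr[|A_{i_1}\cup\cdots\cup A_{i_W}|<\beta m]$ by a union bound over the $\binom{m}{\beta m}$ candidate covering sets $S$, using the exact containment probability $\binom{\beta m}{\alpha m}/\binom{m}{\alpha m}\le(1-\alpha)^{(1-\beta)m}$; the paper instead samples each set by independent Bernoulli$(\alpha-m^{-1/3})$ inclusion, observes that the number of elements missed by a $W$-wise union is $\Binomial((1-\alpha_0)^W,m)$, and applies Chernoff, then pads the sets up to size exactly $\alpha m$ at the end. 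Your version buys a stronger per-tuple bound (decaying like $(1-\alpha)^{(1-\beta)mW}$, i.e.\ exponentially in $W$, so you can afford $\log N$ close to $(1-\beta)m\log\frac{1}{1-\alpha}$ rather than the paper's $(1-\beta)m/(6W)$), at the cost of the entropy tax $2^{h(\beta)m}$, which the Chernoff route avoids entirely. That tax is the one place your calibration is not purely routine: to make $h(\beta)m/W$ a small fraction of $(1-\beta)m\log\frac{1}{1-\alpha}$ you need $W\gtrsim h(\beta)/\bigl((1-\beta)\log\frac{1}{1-\alpha}\bigr)$, and since $h(\beta)/(1-\beta)$ is not $O(\log\frac{1}{1-\beta})$ as $\beta\to0$ (it behaves like $\beta\log\frac{1}{\beta}$ versus $\beta$), your $W$ exceeds the literal $O(\log(1-\beta)/\log(1-\alpha))$ in that corner of the parameter space. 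This is harmless where the lemma is actually invoked --- equation~\eqref{eq:simple-ineq} keeps $\alpha$ and $\beta$ bounded away from $0$ and $1$ in terms of $L$ and $R$, and the paper's own choice of $W$ (defined by $(1-\alpha)^W<(1-\beta)/2$) has the same kind of slack for small $\beta$ --- but you should state $W$ as $O_{\alpha,\beta}(1)$ or restrict the range of $\beta$ rather than claim the displayed asymptotic verbatim. Two trivia to tidy up: ``equivalent'' should be ``implies'' in the covering-set reduction (containment in $S$ only gives $|{\cup}A_{i_j}|\le\beta m$), and you should discard or account for repeated sets among the $N$ samples so that $\mathcal{F}$ is a genuine family of $N$ distinct subsets.
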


\subsection{Proof}

\begin{figure}
  \begin{center}
    \renewcommand\baselinestretch{1.4}
    \newcommand\dashvert[2]{\draw[dashed] (#1,-1.5) -- (#1,#2);}
    \newcommand\tick[1]{\draw (#1,-1.6) -- (#1,-1.4); \dashvert{#1}{1.5}} 
    \begin{tikzpicture}[scale=0.8]
        \def\ep{0.5}
        \def\ii{3}
        \def\ca{gray}
        \def\cb{yellow!40}
        \def\cc{orange!40}
        \def\cd{cyan}
        \def\ce{red!40}

        \draw[fill=\ce] (-\ep,2) rectangle (16,2.5);
        \draw[fill=\ca] (-\ep,1) rectangle (16,1.5);
        \draw[fill=\ca] (-\ep,0) rectangle (16,0.5);
        \draw[fill=\ce] (-\ep,-1) rectangle (16,-.5);
        
        \draw[fill=\cd] (-\ep,1) rectangle (0,1.5);
        \draw[fill=\cd] (-\ep,0) rectangle (0,.5);
        \draw[fill=\cd] (-\ep,-1) rectangle (0,-.5);
        \draw[fill=\cc] (0*\ii,1) rectangle (1*\ii,1.5);
        \draw[fill=\cc] (0,-1) rectangle (\ii,-.5);
        \draw[fill=\cb] (1*\ii,0) rectangle (2*\ii,.5);
        \draw[fill=\cb] (\ii,-1) rectangle (2*\ii,-.5);
        \draw[fill=\ce] (2.5*\ii,1) rectangle (\ep+3.5*\ii,1.5);
        \draw[fill=\ce] (3.3*\ii,0) rectangle (\ep+4.3*\ii,0.5);

        \tick{-\ep}
        \tick{0}
        \tick{\ii}
        \tick{2*\ii}
        \tick{2.5*\ii}
        \tick{\ep+3.5*\ii}
        \tick{-\ep}
       n  \tick{2*\ii}
        \draw (-\ep,-1.5) -- (16,-1.5);
        \draw[line width=3pt] (-\ep,-1) rectangle (16,-.5);
         \dashvert{-\ep}{2.7}
         \dashvert{0}{2.7}
         \dashvert{\ii}{2.7}
         \dashvert{2*\ii}{2.7}
        
        \node at (-2*\ep,2.25) {$c_0$};
        \node at (-2*\ep,1.25) {$c_1$};
        \node at (-2*\ep,0.25) {$c_2$};
        \node at (-2*\ep,-0.75) {$y$};
        \node at (-0.5*\ep,3) {$I_0$};
        \node at (0.5*\ii,3) {$I_1$};
        \node at (1.5*\ii,3) {$I_2$};
        \node at (0.5*\ep+3*\ii,1.25) {$A_1$};
        \node at (0.5*\ep+3.8*\ii,0.25) {$A_2$};
        \node[align=center] at (-0.5*\ep,-2.5) {$d_0$\\$\frac{8\varepsilon n}{3}$};
        \node[align=center] at (0.5*\ii,-2.5) {$d_1$\\$\frac{(1-R-5\varepsilon) n}{3}$};
        \node[align=center] at (1.5*\ii,-2.5) {$d_1$\\$\frac{(1-R-5\varepsilon) n}{3}$};
        \node[align=center] at (3*\ii+0.5*\ep,-2.5) {$a_\mathcal{F}$\\$k-1$};
        \node at (\ii-0.5*\ep,4.2) {$I_*$};
        \node at (\ii-0.5*\ep,3.6) {\scalebox{15}[3]{\rotatebox{270}{$\{$}}};
  \end{tikzpicture}
  \end{center}
  \caption{The agreement pattern we search for via pigeonhole in our upper bound, for $L=2$. Codeword $c_0$ differs from $y$ in at most $d_0+2d_1$ places and codewords $c_1$ and $c_2$ differ from $y$ in at most $n-d_0-d_1-a_\mathcal{F}$ places.} 
  \label{fig:agreement}
\end{figure}

Now, we put the above ideas all together, and generalize to all $L$, to give the full theorem.
By Lemma~\ref{lem:distance-1}, the following theorem implies Theorem~\ref{thm:main}
\begin{theorem}
  Let $L\ge2$ be a fixed constant and $R\in(0,1)$.
  There exists $n_0=n_0(L,R)$ such that the following holds for all $n\ge n_0$ and $\varepsilon\ge 1/n$.
  \label{thm:main-1}
  Let $C$ be a code of length $n$ with alphabet size $q$ that is $(p,L)$-list-decodable for $p=\frac{L}{L+1}(1-R-\varepsilon)$. Suppose also that $C$ has minimum distance at least $(p + \frac{p^L}{2L})n$. Then $q \ge 2^{\Omega_{L,R}(1/\varepsilon)}$.
\end{theorem}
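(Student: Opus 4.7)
The plan is to combine the three ingredients from the warmups---removing linearity (Warmup~1), allowing a gap $\varepsilon$ (Warmup~2), and replacing the MDS/distance assumption with the weaker bound $(p + p^L/(2L))n$ (Warmup~3 together with Lemma~\ref{lem:distance-1})---and to add the ``transfer'' idea from Section~\ref{sec:proof} in order to convert a balanced average-radius configuration into an honest list-decoding configuration for every $L \ge 2$. Concretely, I would partition the first $pn$ coordinates of $[n]$ into disjoint intervals $I_0, I_1, \ldots, I_L$ with $|I_0| = \kappa \varepsilon n$ for a constant $\kappa = \kappa(L)$ to be chosen and $|I_i| = (pn - |I_0|)/L$ for $i \ge 1$; set $I_* := I_0 \cup I_1 \cup \cdots \cup I_L$ and $M := [n] \setminus I_*$, so that $|I_*| = pn$ and $m := |M| = (1-p)n - |I_0|$.

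Next, I would invoke Lemma~\ref{lem:sets} on the ground set $M$ with $\alpha = (k-1)/m$ and $\beta$ a constant (depending on $L$ and $R$) chosen so that $\beta m > n - (p + p^L/(2L))n$. The hypothesis $\alpha < \beta$ can be verified using the identity $1 - R - p = (1 - R + L\varepsilon)/(L+1)$, and this is exactly the place where the specific distance bound $p + p^L/(2L)$ coming from Lemma~\ref{lem:distance-1} earns its keep. This yields a family $\mathcal{F}$ of $2^{\Omega_{L,R}(n)}$ subsets of $M$, each of size $a = k-1$, whose $W$-wise unions have size exceeding $n - (p + p^L/(2L))n$, for some constant $W = O_{L,R}(1)$. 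Suppose for contradiction that $q^{|I_0|} < |\mathcal{F}|/(4LW)$. Following Warmup~2, for each codeword $c$ and each $A \in \mathcal{F}$ let $\mathscr{E}_A$ denote the event that some $c' \ne c$ has $c'|_A = c|_A$; since $a = k - 1$, we have $\Pr_c[\neg \mathscr{E}_A] \le q^{-1} \le 1/2$, so some $c_0 \in C$ satisfies $|\mathcal{F}_{c_0}| > |\mathcal{F}|/2$, where $\mathcal{F}_{c_0} := \{A \in \mathcal{F} : \mathscr{E}_A \text{ occurs}\}$. For each $A \in \mathcal{F}_{c_0}$ pick $f^A(c_0) \ne c_0$ agreeing with $c_0$ on $A$, and pigeonhole on $\Sigma^{|I_0|}$ to obtain a value $v$ and a subfamily $\mathcal{G} \subseteq \mathcal{F}_{c_0}$ of size $|\mathcal{G}| > |\mathcal{F}|/(2 q^{|I_0|}) > 2LW$ on which $f^A(c_0)|_{I_0} \equiv v$.

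I would then argue that the multiset $\{f^A(c_0) : A \in \mathcal{G}\}$ contains at least $L$ distinct codewords, for otherwise some $c^*$ equals $f^A(c_0)$ for at least $W$ values of $A$, and by the $W$-wise union property $c^*$ agrees with $c_0$ on more than $n - (p + p^L/(2L))n$ coordinates, contradicting the minimum distance hypothesis. Take distinct $c_1, \ldots, c_L$ from this multiset, write $A_i \in \mathcal{G}$ for the set producing $c_i$, and define the candidate bad center $y \in \Sigma^n$ by $y|_M := c_0|_M$, $y|_{I_0} := v$, and $y|_{I_i} := c_i|_{I_i}$ for $i = 1, \ldots, L$. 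Then $d(c_0, y) \le |I_*| = pn$ directly, while each $c_i$ agrees with $y$ on the disjoint union $I_0 \cup I_i \cup A_i$, giving $d(c_i, y) \le n - |I_0| - |I_i| - a = (1 - R - p/L)n + 1 - \kappa \varepsilon n (L-1)/L$. Using the identity $1 - R - p/L = p + \varepsilon$ this equals $pn + \varepsilon n + 1 - \kappa \varepsilon n (L-1)/L$, which is at most $pn$ provided $\kappa \ge 2L/(L-1)$ (using $\varepsilon \ge 1/n$). Thus $y$ admits $L + 1$ distinct codewords within distance $pn$, contradicting $(p, L)$-list-decodability; reversing the contradiction gives $q^{|I_0|} \ge |\mathcal{F}|/(4LW) = 2^{\Omega_{L,R}(n)}$, and combining with $|I_0| = \kappa \varepsilon n$ yields the desired bound $q \ge 2^{\Omega_{L,R}(1/\varepsilon)}$.

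The main obstacle I anticipate is the simultaneous balancing of the two distance inequalities above: the constraint $d(c_0, y) \le pn$ caps $|I_*|$ at $pn$ and hence caps $|I_1|, \ldots, |I_L|$, while the constraint $d(c_i, y) \le pn$ lower-bounds $|A_i|$ and, through $\Pr[\neg \mathscr{E}_A] \le q^{a-k}$, forces $|I_0|$ to be a positive constant times $\varepsilon n$. The identity $1 - R - p/L = p + \varepsilon$ is exactly what makes these constraints jointly feasible with $\kappa$ an absolute constant depending only on $L$. A parallel delicacy is that the weaker distance hypothesis $(p + p^L/(2L))n$ (replacing the $(1 - R - O(\varepsilon))n$ of the warmups) forces the use of $W$-wise rather than pairwise unions in Lemma~\ref{lem:sets}, and checking $\alpha < \beta$ in that lemma is what pins down how small $p^L/(2L)$ is allowed to be.
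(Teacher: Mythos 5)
Your proposal is correct and takes essentially the same route as the paper's proof of Theorem~\ref{thm:main-1}: the same interval decomposition $I_0,\dots,I_L$ of the first $pn$ coordinates with $|I_0|=\Theta(L\varepsilon n)$ (the paper takes $d_0=\frac{4L}{L+1}\varepsilon n$, which satisfies your feasibility condition after the paper's rounding step), the same use of Lemma~\ref{lem:sets} with $W$-wise unions to rule out $W$ coincident codewords under the $(p+\frac{p^L}{2L})n$ distance hypothesis, and the same center $y$ with the identical balancing identity (the paper's constraints \eqref{eq:constraint-1}--\eqref{eq:constraint-2} are exactly your $1-R-p/L=p+\varepsilon$). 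The only blemishes are immaterial: $|M|=(1-p)n$ rather than $(1-p)n-|I_0|$, and the integrality/rounding of the interval lengths is left implicit, which the paper handles by reducing to $R,\varepsilon$ being multiples of $\frac{L+1}{n}$.
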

\begin{proof}
  Since we allow $\varepsilon\ge 1/n$, we need to deal with rounding issues. 
  It suffices to consider the case where $R$ and $\varepsilon$ are both integer multiples of $\frac{L+1}{n}$: if the Proposition is true in these cases with $q\ge 2^{\alpha_{L,R}'/\varepsilon}$, and we are given a code $C$ of rate $R$ and satisfying the Proposition's conditions for some $\varepsilon$, we can round $R$ down to the nearest multiple of $\frac{L+1}{n}$, call it $R'$,  and round $\varepsilon+\frac{L+1}{n}$ up to the next multiple of $\frac{L+1}{n}$, call it $\varepsilon'$. 
  Taking $C'$ to be any subcode of $C$ of rate $R'$, we can check $C'$ satisfies the requirements of the Proposition (as $1-R-\varepsilon > 1-R'-\varepsilon'$), so $q\ge 2^{\alpha'_{R,L}/\varepsilon'}\ge 2^{\alpha_{R,L}/\varepsilon}$ where $\alpha_{R,L} = \frac{\alpha'_{R,L}}{40(L+1)}$ and where the latter inequality uses $\varepsilon\ge 1/n$.
  Thus, we assume for the rest of the proof $R$ and $\varepsilon$ are multiples of $\frac{L+1}{n}$ (so that everything that ``should'' be an integer is an integer).

  Since we suppress factors of $R$ and $L$ in the overall alphabet size bound, it suffices to consider $\varepsilon$ sufficiently small compared to $R,L$.
  With hindsight, define the following integer parameters
  \begin{align}
    a_\mathcal{F} &\coloneqq k-1, \label{eq:parameter-1} \\
    a_\cup &\coloneqq \left\lfloor\left(1 - p - \frac{p^L}{4L} \right)n\right\rfloor > \left( 1-p-\frac{p^L}{2L} \right)n, \label{eq:parameter-2} \\
    d_1 &\coloneqq \left(\frac{1-R-5\varepsilon}{L+1}\right)n, \label{eq:parameter-3} \\
    d_0 &\coloneqq \left(\frac{4L\varepsilon}{L+1}\right)n, \label{eq:parameter-4}
  \end{align}
  By assumption that $R$ and $\varepsilon$ are multiples of $\frac{L+1}{n}$, the parameters $d_0$ and $d_1$ are integers.
  We remark that $d_0$ and $d_1$ are chosen to satisfy the following equation and inequalities specifically:
  \begin{align}
    d_0+Ld_1 &= pn, \label{eq:constraint-1} \\
    n - d_0 - d_1 - a_\mathcal{F} &\le pn, \label{eq:constraint-2} \\
    d_0 &\le 4\varepsilon n. \label{eq:constraint-3}
  \end{align}
  Now, let $I_0,\dots,I_{L}$ be consecutive subintervals of $[n]$ (in that order), such that interval $I_0=\{1,\dots,d_0\}$ has size $d_0$, and intervals $I_1,\dots,I_{L}$ have size $d_1$.
  Define $I_* \coloneqq I_0 \cup I_1 \cup \cdots \cup I_L$. Note that $\abs{I_*} = d_0+Ld_1 =^{\eqref{eq:constraint-1}} pn$.

  We have that, for large enough $n\ge n_0(L,R)$ and small enough $\varepsilon \le \varepsilon_0(L,R)$,
  \begin{equation}
    1 - \frac{1}{4L}\left(\frac{1-R}{4}\right)^{L} \ge 1 - \frac{p^L}{4L(1-p)} \ge \frac{a_\cup}{n-pn} \ge \frac{a_\mathcal{F}}{n-pn} \ge \frac{(L+1)R}{1+(L+1)R} \ .
  \label{eq:simple-ineq}
  \end{equation}
  The first inequality holds as $p=\frac{L}{L+1}(1-R-\varepsilon)\ge \frac{1-R}{4}$ and $1-p \le 1$.
  The second inequality follows from definition of $a_\cup$.
  The third inequality holds because $a_\cup>(1-p-\frac{p^L}{2L})n>(1-\delta)n$, where $\delta$ is the relative distance, and $(1-\delta)n\ge k-1= a_\mathcal{F}$ by the Singleton bound.
  The fourth inequality holds because $\frac{a_\mathcal{F}}{n-pn} = \frac{R-1/n}{\frac{1}{L+1} + \frac{L}{L+1}(R+\varepsilon)}\ge \frac{(L+1)R}{1+(L+1)R}$ for $\varepsilon\le \varepsilon_0(L,R)$ sufficiently small and $n\ge n_0(L,R)$ sufficiently large.

  That is, if we define $\alpha \triangleq a_\mathcal{F}/(n-pn)$ and $\beta \triangleq  a_\cup/(n-pn)$, then \eqref{eq:simple-ineq} implies that $\alpha$ and $\beta$ lie in an interval contained in $(0,1)$ that is completely determined by $R$ and $L$ and not on $\varepsilon$.
  By applying Lemma~\ref{lem:sets} on the ground set $[n] \setminus I_*$, we get a constant $W=O_{L,R}(1)$ and a family $\mathcal{F}$ of $2^{\Omega_{L,R}(n)}$ 
  subsets of $[n]\setminus I_*$, each of size $a_\mathcal{F}$, such that every $W$-wise union of sets in $\mathcal{F}$ has size at least $a_\cup$. Here, \eqref{eq:simple-ineq} ensures that $W$ and the number of subsets $2^{\Omega_{L,R}(n)}$ do not depend on $\varepsilon$. 

  Consider picking a uniformly random codeword $c \in C$. 
  For each $A \in \mathcal{F}$, let $\mathscr{E}_A$ be the event that another codeword $c'$ agrees with $c$ on $A$, i.e., $c|_A = c'|_A$. For any $A \in \mathcal{F}$, there at most $q^{a_\mathcal{F}}$ possible values of $c|_A$, and thus at most $q^{a_\mathcal{F}}$ codewords $c$ for which $c|_A$ uniquely determines $c$. Hence,
  \begin{equation}
      \Pr[\neg\mathscr{E}_A] < \frac{q^{a_\mathcal{F}}}{q^k} \overset{\eqref{eq:parameter-1}}{=} \frac{1}{q} \ .
      \label{eq:uniqueness-ub-2}
  \end{equation}
   For each codeword $c$, define the set $\mathcal{F}_c \coloneqq \{A \in \mathcal{F} : \text{$\mathscr{E}_A$ occurs}\}$. For each $A \in \mathcal{F}_c$, we can find, by definition, a codeword $f^A(c) \in C \setminus \{c\}$ such that $f^A(c)|_A = c|_A$. By linearity of expectation and \eqref{eq:uniqueness-ub-2}, we thus find that
  \begin{equation}
      \E\left[\abs{\mathcal{F}_c}\right] = \E\left[\sum_{A \in \mathcal{F}}{\1\{\mathscr{E}_A\}}\right] > \sum_{A \in \mathcal{F}}{\left(1 - \frac{1}{q}\right)} \ge \frac{\abs{\mathcal{F}}}{2} \ .
  \end{equation}
  Hence we can find a codeword $c \in C$ for which $\abs{\mathcal{F}_c} > \abs{\mathcal{F}}/2$. Fix this codeword $c_0=c$. To prove Theorem~\ref{thm:main-1}, it suffices to prove that $2\cdot W\cdot L\cdot q^{d_0} \ge |\mathcal{F}|$.  Suppose for contradiction that
  \begin{align}
    W\cdot L\cdot q^{d_0} < |\mathcal{F}|/2
  \label{eq:false}
  \end{align}
  By pigeonhole and \eqref{eq:false}, there are $WL$ sets $A_1,\dots,A_{WL} \in \mathcal{B}_{c_0}$ such that $f^{A_1}(c_0),\dots,f^{A_{WL}}(c_0)$ agree on the coordinates in $I_0$.
  Further, no $W$ of the codewords can be equal: 
  if, for example, $f^{A_1}(c_0)=f^{A_2}(c_0)=\cdots=f^{A_W}(c_0)$, then this codeword $f^{A_1}(c_0)$ agrees with $c_0$ on the coordinates $ \cup_{i=1}^W{A_i}$, which by construction of $\mathcal{F}$ has size at least $a_\cup >^{\eqref{eq:parameter-2}} (1-p-p^L/2L)n$.
  Thus, we have found two distinct codewords that disagree on less than $(p+p^L/2L)n$ positions, which contradicts the minimum distance assumption of our code.
  Thus, no $W$ codewords among $f^{A_1}(c_0),\dots,f^{A_{WL}}(c_0)$ are equal. In particular, this implies that we have at least $L$ distinct codewords.
  Without loss of generality, say the codewords $c_1\defeq f^{A_1}(c_0),c_2\defeq f^{A_2}(c_0),\dots,c_L\defeq f^{A_L}(c_0)$ are pairwise distinct. 

  Let $y\in[q]^n$ be the list-decoding center that agrees with $c_1$ on coordinates $I_0$ (and thus $c_2, \ldots , c_L$), agrees with $c_j$ on coordinates $I_j$ for $j=1,\dots,L$, and agrees with $c_0$ elsewhere (see Figure~\ref{fig:agreement}). Let us analyze the distance of $y$ to the $L+1$ codewords $c_0, c_1, \ldots , c_L$ in two cases:
  \begin{enumerate}
    \item  First, by construction of $y$, the codeword $c_0$ can only disagree with $y$ on $I_*$. Thus the distance between $y$ and $c_0$ is most $\abs{I_*} = m$, which is at most $pn$ by \eqref{eq:constraint-1}.
    \item For $j=1,\dots,L$, by construction of $y$, codeword $c_j$ agrees with $y$ on $I_0$, $I_j$, and $A_j$. Thus, $c_j$ disagrees with $y$ on at most $n - \abs{I_0} - \abs{I_j} - \abs{A_j} = n - d_0 - d_1 - a_\mathcal{F}$ coordinates, which is at most $pn$ by \eqref{eq:constraint-2}. 
  \end{enumerate}
  Thus, we have found $L+1$ distinct codewords each with Hamming distance at most $pn$ from $y$, contradicting that $C$ is $(p,L)$-list-decodable.
  Hence, \eqref{eq:false} is false, giving us our desired lower bound $q\ge (\frac{|\mathcal{F}|}{2WL})^{1/d_0} \ge^{\eqref{eq:constraint-3}} 2^{\Omega_R(1/\varepsilon)}$.
\end{proof}

\begin{proof}[Proof of Theorem~\ref{thm:main}]
  Let $C$ be a code that is $(p,L)$-list-decodable for $p=\frac{L}{L+1}(1-R-\varepsilon)$.
  By Lemma~\ref{lem:distance-1} and for $\varepsilon$ sufficiently small, every codeword $C$ has a subcode $C'$ of $C$ with minimum distance $(p+\frac{p^L}{2L})n$ and rate $R' = R - \frac{\log (L'+1)}{n}=R'-o(1)$.
Apply Theorem~\ref{thm:main-1} to the subcode $C'$ with rate $R'$ and $\varepsilon'=\varepsilon + \frac{\log(L'+1)}{n} \le \varepsilon+o(1)$, and use that $n$ is sufficiently large to obtain the result. 
\end{proof}

\section{Concluding Remarks}

As alluded to in Remark~\ref{remark:growing-l}, Theorem~\ref{thm:main} focuses on the case when $L$ is a fixed constant independent of $1/\eps$. It nonetheless leaves open the question of showing an alphabet size lower bound for $L$ growing with $1/\eps$. 
\begin{question}
    Can we show that all codes (for sufficiently large $n$) that are $(\frac{L}{L+1}(1-R-\varepsilon),L)$-list-decodable require alphabet size $q\ge 2^{\Omega_R(1/\varepsilon)}$ (independent of $L$)?
\end{question}
In the most general case, our current methods give a constant $\alpha_{L,R}$ in Theorem~\ref{thm:main} that is at most $\exp(-O_R(L))$. However, in special cases, we get better bounds.

For average-radius list-decoding, such an $L$-independent alphabet size lower bound of $q\ge 2^{\Omega_R(1/\varepsilon)}$ follows from the warmup arguments in Sections~\ref{ssec:warmup-1},~\ref{ssec:warmup-2}, and~\ref{ssec:warmup-3}.
In particular, one can check that Proposition~\ref{pr:main-1} holds if we assume $(\frac{L}{L+1}(1-R-\varepsilon),L)$-average-radius-list-decoding, again with minimum distance $(1-R-\varepsilon)n$, and we again get an alphabet size lower bound of $q\ge 2^{\alpha_R/\varepsilon}$, independent of $L$ (the lower bound will in fact be $\binom{n}{Rn}^{1-\delta}$ for some $\delta=\delta(\varepsilon)\to0$ decreasing with $\varepsilon\to0$.).\footnote{More details: We still take $|I_0|=4\varepsilon, \alpha=R-\varepsilon, \beta=R+\varepsilon$, and $\mathcal{F}$ to be the same family of subsets. We prove $q^{|I_0|} \ge |\mathcal{F}|/L$, and, assuming not (for the sake of contradiction), we find a codeword $c$ and $L$ codewords $f^{A_1}(c),\dots,f^{A_L}(c)$ agreeing with each other on $I_0$ and where $f^{A_i}(c)$ agrees with $c$ on $A_i$. As $|A_i\cup A_j|\ge (R+\varepsilon)n$, these codewords are pairwise distinct or else we contradict the code distance. Now we choose the list-decoding center $y$ in the same way, and it has total distance $|I_0| + L\cdot(n-|I_0| - \alpha n)  = 4\varepsilon n + L(1-R-3\varepsilon)n < L(1-R-\varepsilon)n$ to these codewords, contradiction.}
This implies that Proposition~\ref{pr:main-2} also generalizes to give an optimal alphabet size lower bound of $q\ge 2^{\Omega_R(1/\varepsilon)}$ for $(\frac{L}{L+1}(1-R-\varepsilon),L)$-average-radius-list-decoding (without an additional distance assumption).

For ordinary list-decoding, if we can assume a fixed relative distance $\delta > p$ depending only on $R$, then $\alpha_{L,R}$ can be improved to $\Omega_R(1/L)$ by following the same argument as in Theorem~\ref{thm:main-1}.
Combining this bound when $L\le 1/\sqrt{\varepsilon}$ with the list-decoding capacity theorem (Remark~\ref{remark:growing-l}) when $L\ge 1/\sqrt{\varepsilon}$ gives an $L$-independent alphabet size lower bound of $q \ge 2^{\Omega_R(1/\sqrt{\varepsilon})}$.

\bibliographystyle{alpha}
\bibliography{bib}

\appendix

\section{Deferred Proofs}

\subsection{Proof of Proposition~\ref{pr:codes-exist}}
\label{sec:codes-exist}

In this appendix, we prove Proposition~\ref{pr:codes-exist}. We remark that the proof we present seamlessly extends to the notion of average-radius-list-decodabililty.

\begin{proof}[Proof of Proposition~\ref{pr:codes-exist}]
Fix an alphabet $\Sigma$ of size $q$, and set $N \coloneqq \lfloor q^{Rn} \rfloor$ and $p \coloneqq \frac{L}{L+1}(1-R-\eps)$. Consider a code $C = \{c^{(1)}, \ldots , c^{(N)}\} \subseteq \Sigma^n$ where each $c^{(\ell)}$ is chosen independently and uniformly at random from $\Sigma^n$. Set $p \coloneqq \frac{L}{L+1}(1-R-\eps)$. For any set $I \subseteq [N]$ of size $L+1$ and word $w \in \Sigma^n$, let $\mathscr{B}^w_I$ be the event that $d(c^{(\ell)}, w) \le p n$ for all $\ell \in I$. Define the sets $A_\ell \coloneqq \{i \in [n] : c^{(\ell)}_i = w_i\}$ and $E_i \coloneqq \{\ell \in I : c^{(\ell)}_i = w_i\}$. Then by double counting, we find that
\begin{equation}
\label{eq:edge-lb}
\sum_{i=1}^n{\abs{E_i}} = \sum_{\ell \in I}{\abs{A_\ell}} = \sum_{\ell \in I}{(n - d(c^{(\ell)}, w))} \ge (1 + LR + L\eps)n \ .
\end{equation}
Consider the hypergraph $\mathcal{H}$ with vertices $V(\mathcal{H}) = I$ and hyperedges $E(\mathcal{H}) = \{E_1, \ldots , E_n\})$. Let $\mathscr{X}^{\mathcal{H}}_I$ be the event that $c^{(\ell)}_i = w_i$ for all $\ell \in E_i$ and $i \in [n]$. Since each $c^{(\ell)}$ is independently and uniformly chosen from $\Sigma^n$, each hyperedge $E_i$ 'imposes' $\abs{E_i}$ constraints. Thus by using Inequality~\eqref{eq:edge-lb} and union bounding over all choices of $\mathcal{H}$, we find that
\begin{align}
\label{eq:bad-event-ub}
\Pr[\mathscr{B}^w_i] &\le \Pr\left[\text{$\exists$ hypergraph $\mathcal{H}$ such that $\mathscr{X}^{\mathcal{H}}_I$ occurs}\right] \nonumber \\
&\le 2^{(L+1)n} q^{-\sum_{i=1}^n{\abs{E_i}}} \nonumber \\
&\le 2^{(L+1)n} q^{-(1 + LR + L\eps)n} \ .
\end{align}
Now, pick $q \ge 2^{3/\eps}$. Using Inequality~\eqref{eq:bad-event-ub}, we conclude that
\begin{align*}
\E\biggl[\sum_{\substack{w \in \Sigma^n \\ I \subseteq [N], \abs{I} = L+1}}{\1\{\mathscr{B}^w_I\}}\biggr] &\le q^n \cdot q^{(L+1)Rn} \cdot 2^{(L+1)n} q^{-(1 + LR + L\eps)n} \\
&= q^{Rn} \cdot 2^{(L+1)n} \cdot q^{-L\eps n} \\
&\le q^{Rn} \cdot 2^{-Ln} \ .
\end{align*}
Thus we can find a code $C$ of rate $R$ such that $\mathscr{B}^w_I$ occurs for at most $q^{Rn} \cdot 2^{-Ln}$ choices of $w$ and $I$. For each such pair $(w,I)$, fix an index $i_{w, I}  \in I$ and consider the expurgated code $C' \coloneqq C \setminus C_b$. Then $\abs{C'} \ge q^{Rn}(1-2^{-Ln}) = q^{n(R-o(1))}$. Furthermore, since we removed all the 'bad' codewords from, none of the events $\mathscr{B}^w_I$ now occur in $C'$, implying that $C'$ is a $\left(\frac{L}{L+1}(1-R-\eps), L\right)$-list-decodable code.
\end{proof}

\subsection{Proof of Lemma~\ref{lem:distance-1}}
\label{app:distance-1}
\begin{proof}[Proof of Lemma~\ref{lem:distance-1}]
  Let $M=\ceil{\frac{L^2}{p}}$.
  We may assume without loss of generality that the all-0s string $0^n$ is in the code, and by symmetry it suffices to show that there are at most $L+M-1$ codewords within distance at most $\alpha n$ from $0^n$, i.e., (Hamming) weight at most $\alpha n$.
  
  There are clearly at most $L$ codewords of weight at most $pn$ be the list-decoding property (centered at 0). 
  Suppose for contradiction there are $M$ codewords $c_1,\dots,c_M$ of weight between $pn$ and $\alpha n$.

  We claim there are $L$ nonzero codewords $c_{j_1},\dots,c_{j_L}$ such that
  \begin{align}
    \abs{\supp(c_{j_1})\cap \cdots\cap\supp(c_{j_L})} \ge p^Ln.
  \end{align}
  For $i=1,\dots,n$, let $a_i$ denote the number of $j\in[M]$ such that $i\in \supp(c_j)$.
  Let $T$ denote the number of tuples $(i,j_1,\dots,j_L)\in[n]\times [M]^L$ with $j_1<j_2<\cdots<j_L$ such that $i\in \supp(c_{j_1})\cap \cdots\cap\supp(c_{j_L})$.
  By double counting,
  \begin{align}
    \binom{M}{L} \max_{j_1<\cdots<j_L}\abs{\supp(c_{j_1})\cap \cdots\cap\supp(c_{j_L})}
    \ge T
    = \sum_{i=1}^{n} \binom{a_i}{L}
    \ge n\cdot \binom{p M}{L}
    \label{}
  \end{align}
  The last inequality uses convexity of $\binom{\cdot}{L}$ and that $\sum_{i=1}^n a_i =\sum_{j=1}^{M} |\supp(c_j)| \ge M\cdot p n$.
  Rearranging, we have
  \begin{align}
    \max_{j_1<\cdots<j_L}\abs{\supp(c_{j_1})\cap \cdots\cap\supp(c_{j_L})}
    &\ge n\cdot \frac{\binom{p M}{L}}{\binom{M}{L}} \nonumber\\
    &\ge n\cdot \frac{(p M)(p M-1)\cdots(p M-L+1)}{M^L} \nonumber\\
    &\ge n\cdot p^{L} \cdot \left(1-\frac{\binom{L}{2}}{p M}\right) \nonumber\\
    &\ge n\cdot \frac{p^L}{2}.
    \label{eq:distance-1}
  \end{align}
  by the bound on $M$.
  The third inequality uses that $(1-a_1)(1-a_2)\cdots(1-a_n)\ge 1-(a_1+\cdots+a_n)$.

  Without loss of generality, \eqref{eq:distance-1} is realized by
  \begin{align}
    |\supp(c_1)\cap\supp(c_2)\cap\cdots\cap \supp(c_L)|\ge \frac{p^L}{2} n.
    \label{eq:distance-2}
  \end{align}
  Now consider the codewords $0,c_1,c_2,\dots,c_L$, and let $S_1,S_2,\dots,S_L$ be pairwise disjoint subsets of $\supp(c_1)\cap\cdots\cap \supp(c_L)$ of size $(\alpha-p)n=\frac{p^L}{2L}n$.
  There sets exist because of \eqref{eq:distance-2}.
  Consider the word $w$ such that $w$ agrees with $c_j$ on $S_j$ for $j=1,\dots,L$, and is 0 otherwise.
  Note that the distance from $w$ to 0 is at most $|S_1\cup S_2\cdots\cup S_j| \le \frac{p^L}{2} n < pn$. 
  The distance from $w$ to $c_1$ is at most
  \begin{align}
    |\supp(c_1)\cup S_2\cup S_3\cup\cdots\cup S_L \setminus S_1| = |\supp(c_1)\setminus S_1| = |\supp(c_1)|-|S_1| \le pn. 
  \end{align}
  Thus, we've found $L+1$ codewords $0,c_1,c_2,\dots,c_L$ within distance $pn$ of $w$, contradicting list-decodability.
  Hence, there are at most $M-1$ codewords with weight between $pn$ and $\alpha n$, so there are at most $L+M-1$ codewords with weight at most $\alpha n$, as desired.

  The subcode of $C$ with minimum distance $\alpha n$ can be chosen greedily from $C$.
\end{proof}

\subsection{Proof of Lemma~\ref{lem:sets}}
\label{app:sets}
\begin{proof}[Proof of Lemma~\ref{lem:sets}]
  With hindsight, let $W$ be the positive integer such that $(1-\beta)/2 > (1-\alpha)^W \ge (1-\beta)(1-\alpha)/2$, and let $M=2^{(1-\beta) m/6W}$.
  Pick $M$ sets independently by including each element of $[m]$ independently with probability $\alpha_0 =  \alpha - \frac{1}{m^{1/3}}$.
  By standard concentration arguments, with high probability, at most $o(M)$ of the sets are of size more than $\alpha m$.

  For large enough $m$, the probability we have some $W$-wise union of size less than $\beta m$ is at most
  \begin{align}
    \binom{M}{W}\cdot \Pr\left[\Binomial(1-(1-\alpha_0)^W,m) < \beta m\right]
    &= \binom{M}{W}\cdot \Pr\left[\Binomial\left((1-\alpha_0)^W,m\right) > (1-\beta)m\right] \nonumber \\
    &\le \binom{M}{W}\cdot \Pr\left[\Binomial\left((1-\beta)/2,m\right) > (1-\beta)m\right] \nonumber \\
    &\le M^W\cdot e^{-(1-\beta)m/6}  \ll 1.
    \label{}
  \end{align}
  where above used the Chernoff bound \eqref{eq:chernoff} with $\delta=1$.
  Thus, with high probability, all $W$-wise unions have size at least $\beta m$ and at least $M-o(M)$ sets are of size at most $\alpha m$.
  Hence, some choice of sets exists.
  Taking the $M-o(M)$ sets of size at most $\alpha m$, and appending arbitrary elements to them until they have size exactly $\alpha m$, gives our desired family.
\end{proof}

\end{document}